\newenvironment{proof}[1][Proof]{\textbf{#1.} }{\ \rule{0.5em}{0.5em}}
\newtheorem{algo}{Algorithm}
\newtheorem{remark}{Remark}
\newtheorem{theom}{Theorem}
\newtheorem{prop}{Proposition}
\newcommand{\R}{\ensuremath{\mathbb{R}}}
\newcommand{\E}{\ensuremath{\mathbb{E}}}
\newcommand{\Tr}{\ensuremath{\text{tr}}}
\newcommand{\etal}{\emph{et al.}}
\newcommand{\al}{\alpha}
\newcommand{\sig}{\sigma}
\newcommand{\hsig}{\hat{\sigma}}
\newcommand{\hfemp}{\hat{f}_{\text{EMP}}}
\newcommand{\hfkde}{\hat{f}_{\text{KDE}}}
\newcommand{\hflpe}{\tilde{f}_{\text{LOrPE}}}
\newcommand{\hflop}{\hat{f}_{\text{LOrPE}}}
\newcommand{\hfose}{\hat{f}_{\text{OSDE}}}
\newcommand{\hf}{\hat{f}}
\newcommand{\hh}{\hat{h}}
\newcommand{\hr}{\hat{r}}
\newcommand{\xf}{x\sub{fit}}
\newcommand{\Keff}{K\sub{eff}}
\newcommand{\sub}[1]{\ensuremath{_{\mbox{\scriptsize \,#1}}}}
\newcommand{\at}{\tilde{a}\sub{fit}}
\newcommand{\bt}{\tilde{b}\sub{fit}}
\newcommand{\nc}{\normalcolor}
\definecolor{darkgreen}{rgb}{0,0.5,0}
\begin{document}
%%%%%%%%%%%%%%%%%%%%%%%%%%%%%%%%%%%%%%%%%%%%%%%%%%%%%%%%%%%%%%%%%%%%%%%%%%%%
%\linenumbers 

\title{Local Orthogonal Polynomial Expansion for Density Estimation}
\author{D.P. Amali Dassanayake\thanks{Texas Tech University, Department of Mathematics
\& Statistics, Lubbock, U.S.A.}
  \and Igor Volobouev\thanks{Texas Tech University, Department of Physics, Lubbock, U.S.A.}
\and \and A. Alexandre Trindade$^*$}
\maketitle

%placida-amali.dassanayake@ttu.edu
%alex.trindade@ttu.edu
%i.volobouev@ttu.edu

\begin{abstract}
A Local Orthogonal Polynomial Expansion (LOrPE) of the
empirical density function is proposed as a~novel
method to estimate the underlying density. The estimate is constructed by matching
localized expectation values of orthogonal polynomials to the values
observed in the sample. LOrPE is
related to several existing methods, and generalizes straightforwardly to multivariate settings. By
manner of construction, it is similar to Local Likelihood Density
Estimation (LLDE). In
the limit of small bandwidths, LOrPE functions as Kernel
Density Estimation (KDE) with high-order (effective) kernels inherently free of boundary
bias, a~natural consequence of kernel reshaping to accommodate endpoints. Faster
 asymptotic convergence rates follow. 
In the limit of large bandwidths,
LOrPE is equivalent to Orthogonal Series Density Estimation (OSDE)
with Legendre polynomials. We compare the
performance of LOrPE to KDE, LLDE, and OSDE, in a~number of simulation
studies. In terms of mean integrated squared error, the results suggest
that with a~proper balance of the two tuning parameters, bandwidth and
degree, LOrPE generally outperforms these competitors when estimating
densities with sharply truncated supports.
\end{abstract}

\noindent\textbf{Keywords:} boundary bias; kernel density estimation; local likelihood density
estimation; mean integrated
squared error; orthogonal series density estimation; sharply truncated
support. 

%\date{May 2012}
\maketitle

%\linenumbers 
%%%%%%%%%%%%%%%%%%%%%%%%%%%%%%%%%%%%%%%%%%%%%%%%%%%%%%%%%%%%%%%%%%%%%%%%%%%%
\section{Introduction}
Few areas of statistical inference receive as much attention as the
classical problem of nonparametric density estimation. Taking as our
basis for inference a~random sample of
observations $x_1,\ldots,x_n$ from an underlying continuous
distribution with probability density function (PDF) $f(\cdot)$
defined on the compact support $[a,b]$, the simplest starting point is the empirical density function (EDF)
\begin{equation}\label{edf-pdf}
\hfemp(x) = \frac{1}{n} \sum_{i=1}^{n} \delta (x - x_i),
\end{equation}
where $\delta(\cdot)$ is the Dirac delta function. If additionally we
assume the existence of a~first few derivatives or that the PDF can have at most a~few modes,  a~convolution
of the EDF with a~kernel function, $K(\cdot)$, often provides a~much
better estimate by producing a~weighted
average of points close to $x$. $K(\cdot)$ itself is usually chosen to be a~symmetric continuous density with a~scale 
% location and 
parameter, so that the resulting kernel density estimate
(KDE) is
\begin{equation}\label{kde-pdf}
 \hfkde(x)\equiv\int
 \frac{1}{h}K\left(\frac{x-y}{h}\right) \hfemp(y) dy =
 \frac{1}{n}\sum_{i=1}^n\frac{1}{h}K\left(\frac{x-x_i}{h}\right). 
\end{equation} 
The critical KDE tuning parameter is the bandwidth $h$. A convenient,
tractable criterion which is typically 
used to optimize the choice of this parameter is the
mean integrated squared error (MISE).
For large sample sizes, the MISE can be expanded in powers of $n^{-1}$.
The two leading terms in this expansion are associated with
the bias and variance of the estimator. Omission of
all higher-order terms  results in the \emph{asymptotic MISE} (AMISE) approximation.

Under regularity conditions, KDE is consistent, with an AMISE-optimal choice of bandwidth ($h_{\ast}$)
%\[ h_{\ast} = \left(\frac{c_2}{nc_1c_3(f)}\right)^{1/5}, \]
which depends on (computable) kernel moments and the (uncomputable)
integrated squared curvature of $f$. Although the Epanechnikov kernel
minimizes AMISE (is asymptotically optimal),
the choice of kernel is generally not as influential
as the choice of bandwidth. See Silverman (1986), Scott (1992) and Wand \& Jones (1995) for detailed
treatments of the subject, and Sheather (2004), Wasserman (2006,
ch.~6), and  Givens \& Hoeting (2013, ch.~10) for more concise surveys.

Although the optimal $h_{\ast}$ is unattainable in practice, there are several
approaches to dealing with this issue. They range from quick
rules-of-thumb, or \emph{plug-in}
methods,  to the more computationally-intensive bandwidth selection
based on cross-validation (Heidenreich \etal{}, 2013).
Rather, the major drawback of KDE is that
it suffers from boundary
  bias, particularly if $f$ is sharply truncated at the edges of its
  support. In such bounded support settings, KDE fails to attain the optimal
  convergence rate (Jones, 1993). 

One of the earliest attempts at
  correcting this problem was  truncation and reflection of
  boundary kernels (Silverman, 1986). 
Several solutions based on local or adaptive methods have since been
proposed; see for example Malec \& Schienle (2014) for a~survey. A more general solution is to use a~local
polynomial or local likelihood based approach (Hjort \& Jones, 1996,
Loader, 1996, 1999). These methods, and in particular the \emph{local
  likelihood density estimation} (LLDE) detailed in Loader (1999),
  alleviate boundary
  bias, but require the solution of nonlinear equations at each~$x$,
  and  are therefore slow to compute. (Hall \& Tao, 2002, however, argue
that KDE has distinct advantages over LLDE
in the absence of boundary effects.) Although adaptive kernels work
fairly well (e.g., Chen, 1999, Kakizawa, 2004, Jones \& Henderson, 2007), they presume some particular number of derivatives is
matched at the boundary, which affects their asymptotic
performance\footnote{See the R library \texttt{bde} for a~comprehensive implementation of
density estimation methods on bounded supports.}. 

%In principle there should be a more
%general way to achieve this adaptive (kernel) behavior, while also
%attaining  faster asymptotic convergence rates. As will be argued, our
%proposed method seeks to do all this by virtue of using higher order (effective) kernels.

%Adaptive kernels pose an equivalent
%  obstacle by requiring the selection of multiple bandwidths (e.g., Zhu \etal{}, 2006).
%I presume, this refers to adaptive kernel shapes -- gamma
%in case of data with one boundary (as in Malec & Schienle) or beta
%(as in Chen). These methods are not bad but, I believe, LOrPE should
%be able to achieve better asymptotic performance, as it does not
%presume some particular number of derivatives matched at the boundary.
%In principle, one can say that LOrPE does the same thing (modifying
%the kernel shape at the boundary) but in a more general way. Anyway, we can
%always argue that LOrPE can have faster asymptotical
%convergence by virtue of using higher order kernels. 

There is thus a~niche to be filled in the nonparametric density
estimation literature by devising methods that  alleviate the
boundary bias issues in a~more general way than the prescribed
corrections of adaptive methods, whilst attaining the optimal KDE convergence
rates in the interior of the support, and yet do all this in a
computationally efficient manner. As will be argued, our
proposed method attains faster asymptotic convergence rates by virtue
of using higher-order (effective) kernels. The initial
motivation for our quest comes
from high energy physics experiments, where there is a~need to
estimate the distribution of visible energy in jets
({\it i.e.}, collections of particles moving in approximately the
same direction) due to smearing by the detector resolution (Volobouev, 2011).
The situation is complicated by the fact that the
energy of any one jet has to be reconstructed from signals produced
by multiple particles in an array of sensors in the measuring
device (calorimeter) with non-linear response (Wigmans, 2000).

It is sometimes possible to use parametric functions to model
such distributions. The results are fair, but there is room for
improvement. Borrowing from the methods in Thas (2010), one idea is to model the bulk of the distribution with
a flexible parametric model (like Johnson curves, Elderton \& Johnson, 1969), and describe the
deviations from this model nonparametrically, in the spirit of Yang
\& Marron (1999). This can be done with
so  called "comparison distributions" (Thas, 2010, ch.~3). The basic
approach is that if $g$ and $G$ denote respectively the PDF and cumulative distribution function (CDF) of
a generic member of the parametric Johnson curves, and if $\psi$ and $\Psi$
denote the PDF and CDF of a~distribution supported on $[0,1]$, then
$F(x)=\Psi(G(x))$ is also a~CDF, with 
\begin{equation} \label{two-step-f}
f(x)=g(x)\psi(G(x)), 
\end{equation} 
as its
corresponding PDF. (This procedure can be iterated given a~sequence of
CDFs $\{\Psi_1,\Psi_2,\ldots\}$  supported on $[0,1]$.)

This suggests one can model observations $\{x_i\}$ from $X\sim f$
by first approximating $f$ with $g$, even if it proves to be
somewhat inadequate, and then mapping the $\{x_i\}$ to the $[0, 1]$
interval according to the transformation, $y_i=G(x_i)$. The density of
the $\{y_i\}$ can now be approximated, either parametrically or
nonparametrically, to yield an estimate of $\psi$, whence the final $f$
is obtained from (\ref{two-step-f}). In the case that $G=F$, the true
CDF, we have of course that $G(X)$ is uniform on $[0, 1]$, a~fact
which can be used to assess the appropriateness of the initial
$G$ ({\it e.g.}, via the comparison
distribution methodology outlined in Thas, 2010, ch.~3). This is
precisely where improved versions of KDE come in; they are needed to
handle the sharply truncated support boundaries of the density of the
$\{y_i\}$ resulting from this approach.

In multivariate problems, an attractive density estimation approach
consists in decomposing the estimated density into the product of the
copula density and of the marginals (Gijbels \& Mielniczuk, 1990). As the
copula density is defined on the unit hypercube, KDE of the copula density
suffers considerably from  boundary bias. While a~number of methods
have been proposed for alleviating this deficiency (as reviewed in
Charpentier \etal{}, 2006; see also Chen \& Huang, 2007),
the asymptotic convergence rate of these methods
at the boundary is nevertheless
inferior to the convergence rate inside the hypercube.

With this backdrop, we propose the use of \emph{local orthogonal polynomial
expansion} (LOrPE) as a~new method to perform nonparametric density
estimation. The theoretical development and genesis of LOrPE is discussed in
section~\ref{sec:lorpe-intro}. Section~\ref{sec:lorpe-others}
discusses connections with other methods: KDE, LLDE, and \emph{orthogonal
      series density estimation} (OSDE). In particular, we
establish there that LOrPE is equivalent to KDE with
a high-order kernel for points well inside the support of the
PDF. Thus, and through appropriate choice of its tuning parameters
(discussed in section~\ref{sec:mod-select}), LOrPE provides a~general way to achieve adaptive (kernel) behavior, while also
attaining optimal asymptotic convergence
rates. Section~\ref{sec:simulate} examines the performance of
LOrPE closely in some simulation studies, in both oracle (best case)
and non-oracle settings, with respect to the competitors
outlined in section~\ref{sec:lorpe-others}. The paper concludes in
section~\ref{sec:real-data} with an illustration on a~real dataset.

%The C++
%based open-source NPStat package software (Volobouev, 2012) is used for implementing LOrPE 
%computations. 

%%%%%%%%%%%%%%%%%%%%%%%%%%%%%%%%%%%%%%%%%%%%%%%%%%%%%%%%%%%%%%%%%%%%%%%%%%%%
\section{Development of LOrPE}\label{sec:lorpe-intro}

LOrPE inherits several of its features from OSDE (Efromovich, 1999),
      and can in fact be thought of as a
      localized version of OSDE. With $\hf(x)$  a~simple initial estimator
      such as
      (\ref{edf-pdf}), LOrPE amounts to constructing a~truncated
      orthogonal polynomial series expansion for the EDF near each point $x\sub{fit}$
where the density estimate is desired. (In practice, these points would usually
be taken to be uniformly spaced on a~grid of values covering the
support of the density.) For a~chosen bandwidth
$h$, this expansion is
\begin{equation}
\label{eq:expansion}
\hflpe(x) = \sum_{k=0}^{M}c_{k}(x\sub{fit}, h) P_{k}\left(\frac{x - \xf}{h}\right),
\end{equation}
where the polynomials $P_{k}(x)$ are constrained to satisfy the normalization condition
\begin{equation}\label{eq:norm0}
\frac{1}{h} \int_{a}^{b} P_{j}\left(\frac{x - \xf}{h}\right)P_{k}\left(\frac{x - \xf}{h}\right) K\left(\frac{x - \xf}{h}\right)dx = \delta_{jk},
\end{equation}
which, with  $\at=(a - x\sub{fit})/h$ and  $\bt=(b - x\sub{fit})/h$, is equivalent to
\begin{equation}
\label{eq:norm}
\int_{\at}^{\bt} P_{j}(y)P_{k}(y)K(y)dy
= \delta_{jk},
\end{equation}
where $\delta_{jk}$ is the Kronecker delta
and $K(\cdot)$ a~suitably chosen kernel function. The coefficients $c_{k}(x\sub{fit}, h)$
are determined by
\begin{equation}\label{eq:convol}
c_{k}(x\sub{fit}, h) = \frac{1}{h} \int \hf(x)P_{k}((x - x\sub{fit})/h)K((x - x\sub{fit})/h)dx,
\end{equation}
which, for $\hf(x)=\hfemp(x)$, is equivalent to
\begin{equation}\label{eq:emp-ck}
c_{k}(x\sub{fit}, h) = \frac{1}{n h}\sum_{i=1}^{n}P_{k}((x_{i} - x\sub{fit})/h) K((x_{i} - x\sub{fit})/h).
\end{equation}
Because negative values can occur, the proposed density estimate at
$x=\xf$ is then $\max\{0, \hflpe(x\sub{fit})\}$. In general, this does
not result in a~\emph{bona fide} density function (similarly to
OSDE), and thus the final step in the process involves performing a
renormalization over all grid points. The final (genuine) density
estimate at $x$ is denoted by $\hflop(x)$. Generalizing LOrPE to
a multivariate setting is in principle straightforward, necessitating
only a~switch to multivariate orthogonal polynomial systems. 

Equation (\ref{eq:expansion}) can be usefully generalized to include a~{\it taper function} $t(k)$ as follows:
\begin{equation}\label{eq:lorpe}
\hflpe(x) = \sum_{k=0}^{\infty} t(k) c_{k}(x\sub{fit}, h) P_{k}((x - x\sub{fit})/h).
\end{equation}
The idea of the taper function is to suppress
high order terms gradually, instead of using a
sharp cutoff at $M$. Also, as will be discussed in section~\ref{sec:mod-select}, a~particular definition
of the taper function allows for a~simple extension of (\ref{eq:expansion}) to
non-integer values of $M$. We will normally require that
$t(0) = 1$ in order to ensure correct asymptotic
normalization, in addition to specifying that $t(k) = 0$ for $k >
M$. 

LOrPE admits an appealing interpretation in terms of the local density
expansion (\ref{eq:expansion}), in which the ``localized'' expectation values of the
orthogonal polynomials $P_k(\cdot)$ are matched to their empirical values calculated
from the data sample. This heuristic interpretation can be understood by making the
following observation. Define the
\emph{localized expectation} (at $x\sub{fit}$) of
a function $\phi$ with respect to kernel $K$ (bandwidth $h$) for a~random variable $X\sim
f$ as,
\[ \E^{(\text{loc})}_{f}[\phi(X)]=\frac{\int \phi(x)K(x)f(x)dx}{\int K(x)f(x)dx}. \]
Then, upon setting  $\phi(x)=P_k(x)$, note that
\[ \E^{(\text{loc})}_{\hfemp}[P_k(X)]=\frac{c_{k}(x\sub{fit},
  h)}{c_{0}(x\sub{fit}, h)} = \E^{(\text{loc})}_{\hflpe}[P_k(X)]. \]

%%%%%%%%%%%%%%%%%%%%%%%%%%%%%%%%%%%%%%%%%%%%%%%%%%%%%%%%%%%%%%%%%%%%%%%%%%%%
\section{Connections With Other Methods}\label{sec:lorpe-others}
This section explores the connections between LOrPE and KDE, OSDE, and
LLDE. We will show that under certain conditions LOrPE is
essentially equivalent to KDE (Theorem~\ref{th:lorpe-facts}); while under other conditions its
behavior mimics OSDE (Theorem~\ref{th:osde}). Also, the local adjustments instituted
by LOrPE to reduce support boundary bias are very much in the spirit of LLDE.

%%%%%%%%%%%%%%%%%%%%%%%%%%%%%%%%%%%%%
\subsection{Kernel density estimation}
In
general, LOrPE behaves as a~linear combination
of KDEs with varying kernels. 
To see  this, define $K_k(z)= P_k(z)K(z)$, and note that from (\ref{eq:convol}) with $\hf(x)=\hfemp(x)$ we can
write the expansion coefficients as
\begin{eqnarray*}
c_{k}(x\sub{fit}, h) =\int \frac{1}{h} K_{k}\left(\frac{x -
x\sub{fit}}{h}\right) \hfemp(x)dx \equiv  \hfkde(x | h,K_k),
\end{eqnarray*}
where the notation $\hfkde(x|h,K)$ emphasizes the dependence on bandwidth $h$ and
kernel $K$. Thus (\ref{eq:expansion}) can be written as a~weighted linear combination
of KDEs with varying (improper) kernels $K_k$,
\begin{equation}\label{eq:lorpe-kde-combo}
\hflpe(x) = \sum_{k=0}^{M} \hfkde(x | h,K_k)P_{k}\left(\frac{x -
    \xf}{h}\right). 
\end{equation}

The following proposition establishes a~basic result concerning the
families of orthogonal polynomials arising from commonly used kernels.
%\newpage
\begin{prop}\label{rk:gegen}
For commonly used kernels from
the Beta family supported on $[-1,1]$ (Epanechnikov, Biweight, Triweight, {\it etc}.), condition (\ref{eq:norm})
generates the normalized Gegenbauer polynomials (up to a~common multiplicative
constant) at grid points $\xf$ sufficiently deep inside the support
interval, provided $h$ is small enough to guarantee that
$\at\leq -1$ and $\bt\geq 1$. 
\end{prop}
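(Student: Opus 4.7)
The plan is to reduce condition (\ref{eq:norm}) to an orthonormality condition on the standard interval $[-1,1]$ with a Beta-type weight, and then invoke the uniqueness of orthogonal polynomial systems to identify the $P_k$ as (rescaled) Gegenbauer polynomials.

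First, I would note that every kernel in the Beta family on $[-1,1]$ has the form $K(y)=c_m(1-y^2)^m$ for some non-negative integer $m$ and normalization constant $c_m$ (so $m=1$ gives Epanechnikov, $m=2$ Biweight, $m=3$ Triweight, etc.). The hypothesis that $\xf$ is deep enough in the support and that $h$ is small enough to ensure $\at\le -1$ and $\bt\ge 1$ means the kernel's support $[-1,1]$ lies strictly inside the integration interval $[\at,\bt]$. Since $K(y)=0$ outside $[-1,1]$, condition (\ref{eq:norm}) collapses to
\begin{equation*}
c_m \int_{-1}^{1} P_j(y)P_k(y)(1-y^2)^m\, dy = \delta_{jk}.
\end{equation*}

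Next, I would recall the defining orthogonality relation of the Gegenbauer polynomials $C_k^{(\alpha)}(y)$: they are the (essentially unique) orthogonal polynomial family on $[-1,1]$ with respect to the weight $(1-y^2)^{\alpha-1/2}$. Matching exponents with the Beta-family weight $(1-y^2)^m$ forces $\alpha=m+\tfrac{1}{2}$. By the standard uniqueness theorem for orthogonal polynomial systems (given a positive weight on a fixed interval, the monic orthogonal polynomials are uniquely determined by a Gram--Schmidt construction on $1,y,y^2,\ldots$), any family $\{P_k\}$ satisfying the above relation must coincide with $\{C_k^{(m+1/2)}\}$ up to scaling of each element. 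Rescaling to achieve the normalization $\delta_{jk}$ fixes each polynomial up to a sign, and absorbing the factor $c_m$ into this rescaling produces a common multiplicative constant relative to the standard Gegenbauer normalization, which is precisely the content of the proposition.

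The argument is essentially bookkeeping once the two observations above are in place; the only point that requires a little care is justifying the uniqueness step and making explicit that the constant $c_m$ is absorbed into the normalization, so that the ``common multiplicative constant'' appearing in the statement is exactly $\sqrt{c_m}$ times the ratio between the author's normalization and the usual Gegenbauer normalization. No nontrivial obstacle is anticipated; this proposition is an identification, not a bound.
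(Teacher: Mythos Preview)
Your proposal is correct and follows essentially the same approach as the paper: reduce the integral in condition~(\ref{eq:norm}) to $[-1,1]$ using the compact support of the Beta kernel together with $\at\le -1$, $\bt\ge 1$, and then identify the weight $(1-y^2)^m$ with the Gegenbauer weight $(1-y^2)^{\alpha-1/2}$ for $\alpha=m+\tfrac{1}{2}$. The only minor difference is that the paper runs the verification in the other direction (starting from the Gegenbauer orthogonality relation and showing it matches~(\ref{eq:norm}) up to the constant $c_\alpha$), leaving uniqueness implicit, whereas you make the uniqueness of orthogonal polynomial systems explicit; this is the same argument with one step spelled out more carefully.
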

\begin{proof}
By definition, the normalized Gegenbauer polynomials,
$P_j^{(\alpha)}(x)$, $j=0,1,\ldots$, are orthogonal on $[-1,1]$ with
respect to the weight function $w(x)=(1-x^2)^{\al-1/2}$, for some
$\al\geq -1/2$. This means that 
\begin{equation}\label{eq:gegen}
\int_{-1}^1
P_j^{(\alpha)}(x)P_k^{(\alpha)}(x)w(x)\,dx = \delta_{jk}. 
\end{equation}
Noting that $w(x)=c_\al K(x)$, where $K(x)=c_\al^{-1}(1-x^2)^{\al-1/2}I_{[-1,1]}(x)$ is a
beta kernel with associated normalizing constant
$c_\al = \Gamma(\al+1)/[\sqrt{\pi}\Gamma(\al+1/2)]$, equation (\ref{eq:gegen}) becomes
\begin{eqnarray}
\delta_{jk} &=& \int_{-1}^1  P_j^{(\alpha)}(x)P_k^{(\alpha)}(x)c_\al K(x)\,dx
= c_\al \int_{\at}^{\bt}
P_j^{(\alpha)}(x)P_k^{(\alpha)}(x)K(x)\,dx,\nonumber
\end{eqnarray} 
since $K(x)=0$ outside of $[-1,1]$ and $\at\leq -1$ and
$\bt\geq 1$. This requires extending the polynomials so that they are also defined on $[\at,\bt]$.
While this extension is not unique, any reasonable definition
will do, \emph{e.g.}, by using the same coefficients
as on the $[-1, 1]$ interval. Values of $\al=3/2,5/2,7/2,9/2$ define
respectively the Epanechnikov, Biweight, Triweight, and Quadweight kernels. 
\end{proof}

\begin{remark}
If $\xf$ is
sufficiently close to the ends of the support $[a,b]$ relative to the
kernel support, then, since the
kernel is used as the weight function in generating them, the
polynomials will vary depending on $\xf$, and the notation
$P_{k}(\cdot,\xf)$ would be more appropriate. This in turn implies the
kernels $K_k$ in (\ref{eq:lorpe-kde-combo}) also depend on  $\xf$, and
will undergo adjustments near the boundary. For example,
with the Beta kernels of Proposition \ref{rk:gegen}, the effective
support of $K_k$ becomes $[\max(-1,\at), \min(1,\bt)]$.
%However, we will usually suppress this dependence on $\xf$ in order to simplify the
%notation.
\end{remark}

The following theorem establishes the main
result that, when evaluated
at grid points far
from the support boundaries,  LOrPE is equivalent to KDE with
a high-order kernel. In particular, this results implies that (under
the appropriate conditions) LOrPE
enjoys the same asymptotic optimality results as does KDE. Unlike KDE
however, LOrPE does not intrinsically suffer from boundary bias because the
orthogonality requirement imposed by
(\ref{eq:norm}) automatically adjusts the shape of 
the (orthogonal) polynomials near the boundary.

%%%%%%%%%%%%%%%%%%%%%%%%%%%%%%%%%%
\begin{theom}\label{th:lorpe-facts}
When evaluated at points $x\sub{fit}$, (\ref{eq:lorpe})
is equivalent to KDE with the effective
kernel
\begin{equation}\label{eq:effk}
K\sub{eff}(x) = \sum_{k=0}^{\infty} t(k) P_{k}(0) P_{k}(-x) K(-x).
\end{equation}
Under the following additional Assumptions:
\begin{itemize}
\item[(a)] $K(x)$  is an even kernel supported
on some interval $(-a_K,a_K)$ that is symmetric about 0;
\item[(b)] $x\sub{fit}$ is sufficiently far from the density support
boundaries $[a,b]$ so that the $P_{k}(\cdot)$'s can be generated on an interval
of orthogonality that is symmetric about zero,  and subsequently
extended to $[\at,\bt]$ by keeping the same coefficients, where $\at\equiv(a-\xf)/h$ and
      $\bt\equiv(b-\xf)/h$, as in the proof of Proposition~\ref{rk:gegen}; and
\item[(c)] we have $\at\leq -a_K<a_K\leq\bt$;  
\end{itemize}
then the  effective kernel (\ref{eq:effk}):
\begin{itemize}
\item[(i)]  is an even function supported on $(-a_K,a_K)$;
\item[(ii)]  is normalized provided  $t(0) = 1$; and
\item[(iii)] is a~high-order kernel if $t(k)$ is a~step function,
  i.e.~$t(k) = 1$ for all $k \le M$ and $t(k) = 0$ for all $k > M$, in
  which case the kernel order is $M+1$ if $M$ is odd and $M+2$ if $M$ is
      even.
\end{itemize}
\end{theom}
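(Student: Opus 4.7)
The plan is to establish the main KDE equivalence by direct computation and then handle parts (i)--(iii) by exploiting the symmetries forced by Assumptions (a) and (b). For the equivalence, I would substitute (\ref{eq:emp-ck}) into (\ref{eq:lorpe}) evaluated at $x=\xf$, interchange the finite sum over the sample with the sum over $k$, and collect the per-datapoint weight. Writing $z_i=(\xf-x_i)/h$, the inner expression $\sum_{k} t(k) P_k(0) P_k(-z_i) K(-z_i)$ is precisely $\Keff(z_i)$ as defined in (\ref{eq:effk}), so (\ref{eq:lorpe}) at $\xf$ collapses to $(nh)^{-1}\sum_{i} \Keff((\xf-x_i)/h)$, i.e.\ KDE with kernel $\Keff$.

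For (i) and (ii), the decisive observation is that, by (a), $K$ is even, and by (b) the weight $K$ in (\ref{eq:norm}) is supported on an interval symmetric about $0$. Standard theory of orthogonal polynomials with an even weight on a symmetric interval then forces $P_k(-z)=(-1)^k P_k(z)$, so in particular $P_k(0)=0$ for every odd $k$. Substituting this into (\ref{eq:effk}) eliminates all odd-$k$ summands, leaving $\Keff(x)=\sum_{k\text{ even}} t(k) P_k(0) P_k(x) K(x)$, which is visibly even with support in $(-a_K,a_K)$. For (ii), I would use that $K$ is a density (so $\int K=1$), whence the constant polynomial $P_0\equiv 1$ solves the $j=k=0$ instance of (\ref{eq:norm}). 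Orthonormality then gives $\int P_k(y)K(y)\,dy=\delta_{k0}$, and termwise integration of (\ref{eq:effk}) after the substitution $y=-x$ yields $\int \Keff(x)\,dx=t(0)P_0(0)=t(0)$, which equals $1$ under the hypothesis $t(0)=1$.

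For (iii), the algebraic input I would prove is the monomial identity: expanding $x^j=\sum_{k=0}^{j} c_{jk} P_k(x)$ with $c_{jk}=\int y^j P_k(y) K(y)\,dy$ and evaluating at $x=0$ gives $\sum_{k=0}^{j} P_k(0)\,c_{jk}=\delta_{j0}$. Changing variables in $\int x^j \Keff(x)\,dx$ via $y=-x$ and using $c_{jk}=0$ for $k>j$ reduces the $j$-th moment of $\Keff$ (for the step taper) to $(-1)^j \sum_{k=0}^{\min(j,M)} P_k(0) c_{jk}$. When $1\leq j\leq M$ this equals $(-1)^j\delta_{j0}=0$ by the identity. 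Part (i) already rules out non-zero odd moments. To pin down the exact order, I would compute the first surviving moment: at $j=M+1$ the identity now takes the form $\sum_{k=0}^{M+1}P_k(0)c_{M+1,k}=0$, so the truncated sum equals $-P_{M+1}(0)c_{M+1,M+1}$, giving $\int x^{M+1}\Keff = (-1)^{M}P_{M+1}(0) c_{M+1,M+1}$. For $M$ odd this is non-zero, since $M+1$ is even so $P_{M+1}(0)\neq 0$ and $c_{M+1,M+1}$ is the reciprocal of the leading coefficient of $P_{M+1}$; hence the order is $M+1$. When $M$ is even, $P_{M+1}(0)=0$ by parity forces the $(M+1)$-th moment to vanish, and the analogous computation at $j=M+2$ (now killing the $k=M+1$ term by parity as well) yields $-P_{M+2}(0)c_{M+2,M+2}\neq 0$, so the order is $M+2$.

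The main obstacle I anticipate is bookkeeping: correctly tracking the sign produced under the reflection $y=-x$ in every integral derived from (\ref{eq:effk}), and keeping the parity restrictions on the $P_k$ distinct from the vanishing of moments forced by the step taper. The single substantive lemma, the monomial identity $\sum_{k=0}^{j}P_k(0)c_{jk}=\delta_{j0}$, is what simultaneously delivers the vanishing of moments $1,\ldots,M$ and isolates the single-term expression for the first surviving moment that determines the kernel order in (iii).
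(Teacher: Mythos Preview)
Your proposal is correct and follows essentially the same route as the paper: the same substitution-and-interchange for the KDE equivalence, the same parity argument for (i), the same use of $P_0\equiv 1$ and orthonormality for (ii), and the same monomial expansion $x^j=\sum_k c_{jk}P_k(x)$ evaluated at $0$ for (iii). One minor difference worth noting: the paper first passes to the simplified even form $\Keff(x)=\sum_{k\text{ even}} t(k)P_k(0)P_k(x)K(x)$ before computing moments, which sidesteps the $(-1)^j$ bookkeeping you flag; and you go a step further than the paper by actually computing the first surviving moment and checking it is nonzero (via $P_{M+1}(0)\neq 0$ for $M$ odd, $P_{M+2}(0)\neq 0$ for $M$ even, and nonvanishing leading coefficients), a point the paper's proof leaves implicit.
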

%%%%%%%%%%%%%%%%%%%%%%%%%%%%%%%%%%

\begin{proof}
See  the appendix.
\end{proof}
%\bigskip

The local adjustments made by LOrPE near the support boundary are
illustrated in Figure~\ref{fig:eff-kernel-plots}. In these plots, the
effective kernel $K\sub{eff}$ is shown vs.~$x-\xf$ for a~density that is
sharply truncated at $0$. The normal
density is used as the weight function, with bandwidth set at
$h=0.1$. Polynomials up to degree $M=4$ are considered. The plots
correspond to LOrPE density estimation on the $[0,1]$ interval for
points: exactly at the boundary (left panel), close to the boundary
(middle panel), and away from the boundary (right panel).
\begin{figure}[t]
\begin{center}
\begin{tabular}{ccc}
(a) & (b) & (c) \\
\vspace{-1cm} \\
\includegraphics[scale=0.36]{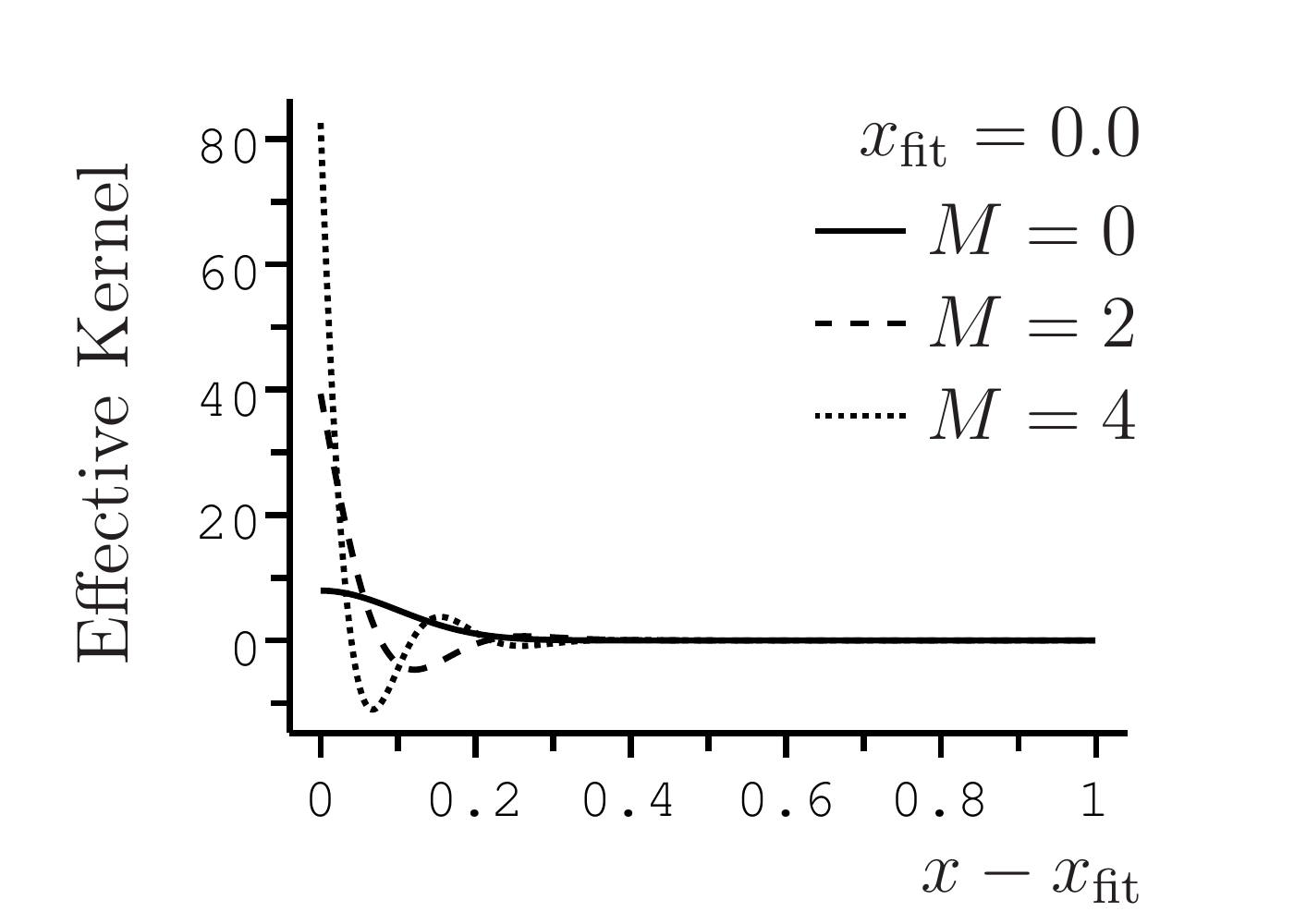} &
\includegraphics[scale=0.36]{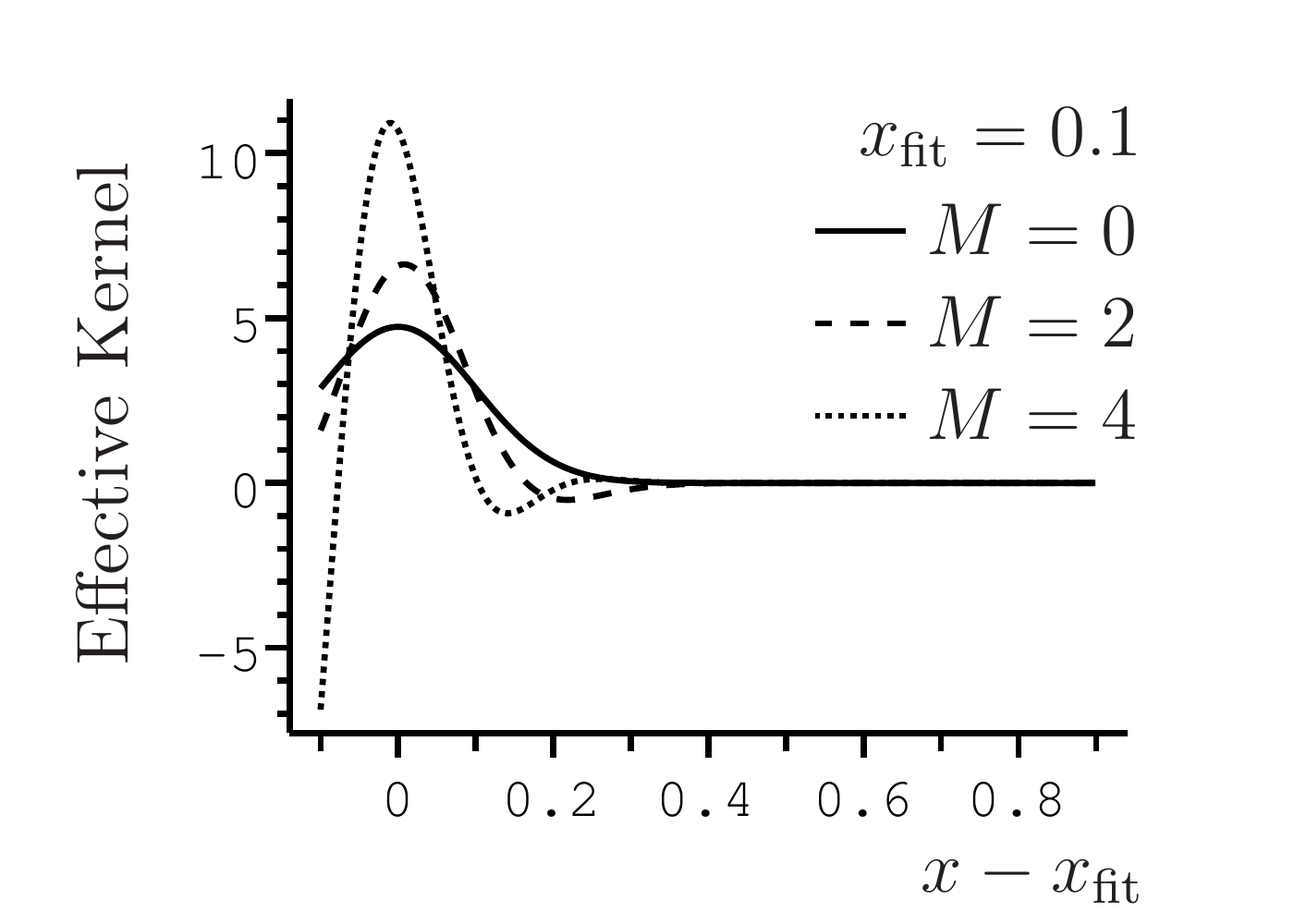} &
\includegraphics[scale=0.36]{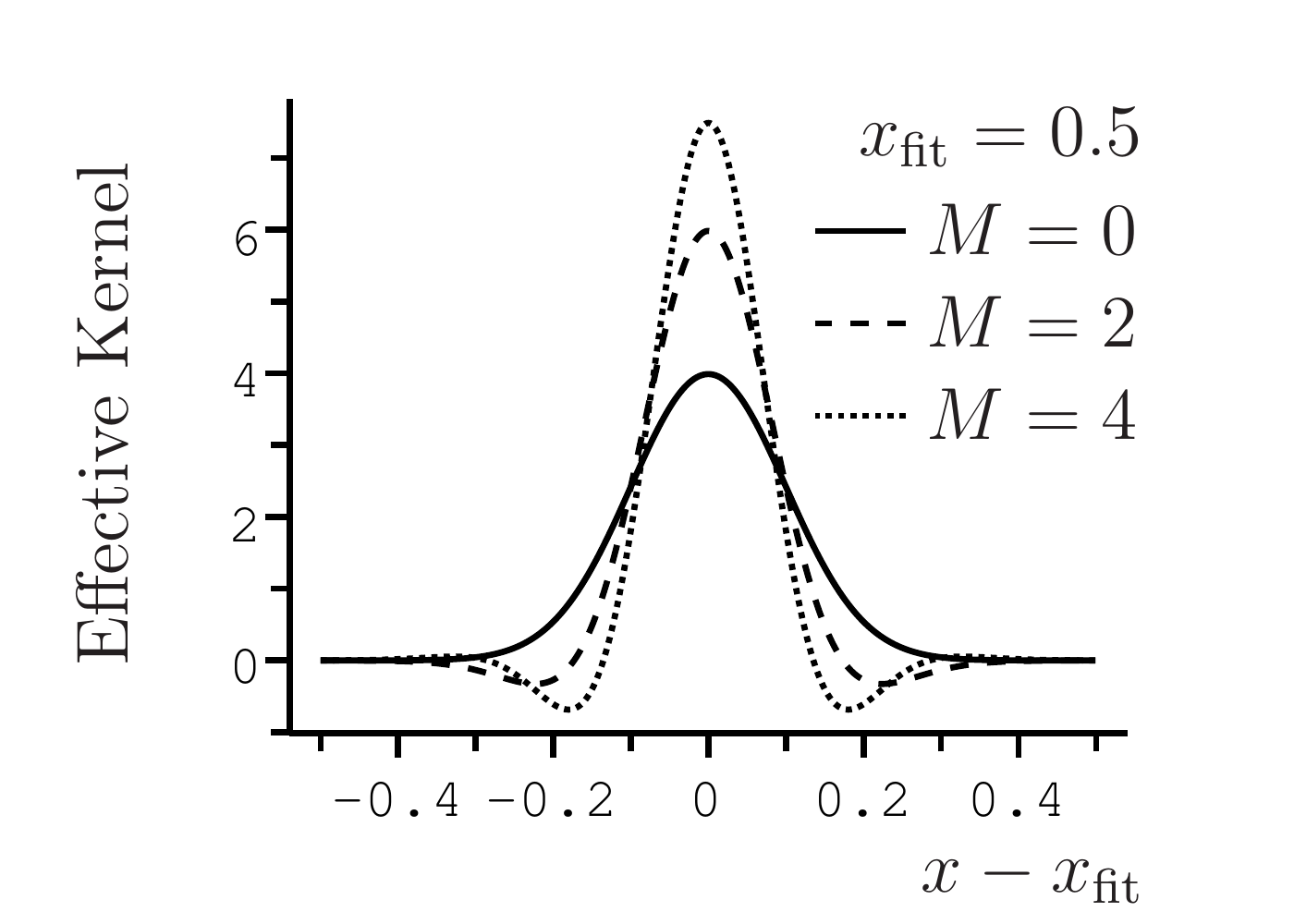} 
\hspace{1cm}
\end{tabular}
\caption{LOrPE effective kernel plots for a~density that is
sharply truncated at $0$, corresponding to support data points: (a) exactly at the boundary,
(b) close to the boundary, and (c) away from the boundary.}
\label{fig:eff-kernel-plots}
\end{center}
\end{figure}
%

%%%%%%%%%%%%%%%%%%%%%%%%%%%%%%%%%%%%%%%
\subsection{Orthogonal  series density estimation}

The key idea underlying OSDE for a~univariate density can be traced
back to at least \v{C}encov (1962). Updated monograph-length treatments of
the topic can be found in Tarter \& Lock (1993) and Efromovich (1999).
There is a~strong connection between LOrPE and OSDE. If
    $\{\phi_k\}$ is an orthonormal basis and $f$
    is square integrable, then the classical OSDE of $f(x)$
    is
\begin{equation} \label{classic-osde}
\hfose(x) = \sum_{j=0}^{J}
\theta_j\phi_j(x), \qquad\text{where}\qquad \theta_j=\frac{1}{n}\sum_{i=1}^{n}
\phi_j(x_i).  
\end{equation}
The tuning parameters here consist of the choice of basis functions and their
number, $J$, to carry in the summation. In a~more general form,
and adapted for densities supported on $[a,b]$, this
estimator can be represented as
\begin{equation} \label{general-osde}
\hfose(x) = \frac{1}{b-a}+\sum_{j=1}^{\infty}
w_j\theta_j\phi_j(x),   
\end{equation}
where the $w_j\in [0,1]$ are shrinkage
coefficients (Efromovich, 1999). Comparing (\ref{eq:lorpe}) and (\ref{general-osde}), we
see immediately that
LOrPE can be viewed heuristically as a
localized version of OSDE, since the ``basis functions'' $\{P_k\}$ in the
former are not global, but adjust locally depending on $\xf$. Another facet of the connection between these estimators is revealed in the
following theorem, which establishes that, for large bandwidths, LOrPE is essentially equivalent
to OSDE with a~Legendre polynomial basis.
%\cb
\begin{theom}\label{th:osde}
In the limit as $h \rightarrow \infty$, the LOrPE estimate
(\ref{eq:expansion}) for pdf $f(x)$ with finite support $[a, b]$,  reduces to
classical OSDE in terms of the basis functions
\[ \phi_j(x)=\sqrt{\frac{2}{b-a}} L_j\left(\frac{2x-a-b}{b-a}\right), \] 
where the $\{L_j\}$ are orthonormal Legendre polynomials on $[-1,1]$.
\end{theom}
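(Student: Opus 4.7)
The plan is to take the $h\to\infty$ limit at the level of the orthonormality condition (\ref{eq:norm}) and show that, after a natural rescaling, the polynomials $P_k((x-\xf)/h)$ collapse onto the shifted Legendre basis $\phi_k$ on $[a,b]$. The key observation is that as $h\to\infty$, both endpoints $\at=(a-\xf)/h$ and $\bt=(b-\xf)/h$ tend to $0$, so for any continuous kernel the weight function $K(\cdot)$ becomes effectively constant (equal to $K(0)$) on the ever-shrinking orthogonality interval. Thus in the limit the $P_k$'s are determined by orthonormality with respect to a constant weight, which is precisely the characterization of Legendre polynomials (modulo a scaling to the interval $[a,b]$).

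First I would change variables $y=(x-\xf)/h$ in (\ref{eq:norm}) to rewrite the orthonormality condition on the $[a,b]$ scale,
\[ \frac{1}{h}\int_a^b P_j\!\left(\frac{x-\xf}{h}\right)P_k\!\left(\frac{x-\xf}{h}\right)K\!\left(\frac{x-\xf}{h}\right)dx = \delta_{jk}. \]
Letting $h\to\infty$ replaces $K((x-\xf)/h)$ by $K(0)$ uniformly on $[a,b]$, so the rescaled polynomials $\tilde Q_k(x;h):=\sqrt{K(0)/h}\,P_k((x-\xf)/h)$ satisfy $\int_a^b \tilde Q_j \tilde Q_k\,dx\to\delta_{jk}$. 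The unique orthonormal polynomial basis on $[a,b]$ with respect to Lebesgue measure is exactly $\phi_k(x)=\sqrt{2/(b-a)}\,L_k((2x-a-b)/(b-a))$, so (fixing signs via the leading-coefficient convention) we conclude $P_k((x-\xf)/h)\to\sqrt{h/K(0)}\,\phi_k(x)$.

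Next I apply the same approximation $K((x_i-\xf)/h)\to K(0)$ inside (\ref{eq:emp-ck}) to obtain
\[ c_k(\xf,h)\longrightarrow \frac{K(0)}{nh}\sum_{i=1}^n \sqrt{\frac{h}{K(0)}}\,\phi_k(x_i) \;=\; \sqrt{\frac{K(0)}{h}}\,\theta_k, \]
where $\theta_k=n^{-1}\sum_i \phi_k(x_i)$ as in (\ref{classic-osde}). Substituting into the expansion (\ref{eq:expansion}), the two $h$-dependent prefactors $\sqrt{K(0)/h}$ and $\sqrt{h/K(0)}$ cancel exactly, leaving $\hflpe(x)\to\sum_{k=0}^{M}\theta_k\,\phi_k(x)$, which is precisely the classical OSDE in the claimed Legendre basis.

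The main technical obstacle is the rigorous justification of the limit of the orthogonal polynomial family itself: one must show that the Gram--Schmidt construction depends continuously on the weight, so that the polynomials orthonormal with respect to $K((x-\xf)/h)\,dx/h$ on $[a,b]$ really do converge (uniformly on $[a,b]$, with the stated $\sqrt{h}$ scaling) to the rescaled Legendre polynomials as $h\to\infty$. Everything else in the proof is a bookkeeping exercise tracking the cancellation of the $\sqrt{h}$ and $1/h$ factors between the $P_k$'s and the $c_k$'s.
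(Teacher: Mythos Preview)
Your proposal is correct and follows essentially the same route as the paper: both arguments observe that as $h\to\infty$ the weight $K((x-\xf)/h)$ becomes constant $K(0)$ on $[a,b]$, so the orthonormal system collapses onto the shifted Legendre basis with a $\sqrt{h/K(0)}$ scaling, and then substitute into the LOrPE expansion to exhibit the cancellation of the $h$-dependent factors. The only cosmetic difference is that the paper writes down the limiting polynomials $\tilde P_k$ explicitly via the change of variables $y=(x-\xf)/h$ and then plugs them into the $\Keff$-form of $\hflpe$ from Theorem~\ref{th:lorpe-facts}, whereas you invoke uniqueness of the orthonormal basis on $[a,b]$; your honest caveat about continuity of Gram--Schmidt in the weight is a point the paper simply takes for granted.
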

\begin{proof}
See the appendix.
\end{proof}

%%%%%%%%%%%%%%%%%%%%%%%%%%%%%%%%%%%%%%%
\subsection{Local likelihood density estimation}
In spirit (but
not mathematical detail)
LOrPE is also very similar to LLDE; see Hjort \& Jones (1996), and Loader (1996, 1999). As observed by Loader (1999, ch.~5), LLDE overcomes boundary bias by matching localized sample moments to population
moments using the log-polynomial density approximation (polynomial
approximations on the log scale). As was noted in section~\ref{sec:lorpe-intro}, LOrPE instead
matches localized expectation values of orthogonal polynomials
to their sample values using polynomial density approximations (polynomial
approximations on the original scale). Although
the LLDE approach may be theoretically superior, LOrPE enjoys
the pragmatic advantages of computational speed and numerical
stability, as it does not involve the solution of non-linear
equations  at every grid point.

%%%%%%%%%%%%%%%%%%%%%%%%%%%%%%%%%%%%%%%%%%%%%%%%%%%%%%%%%%%%%%%%%%%%%%%%%%%%
\section{Selection of Tuning Parameters}\label{sec:mod-select}
This section discusses strategies for selecting the two LOrPE tuning
parameters, bandwidth ($h$) and polynomial degree ($M$). (In principle
the taper function $t(\cdot)$
could also be tuned, but for simplicity we restrict our attention to simple truncation.) We emphasize
this dependence on tuning parameters by writing
\[ \hflpe(x)\equiv\hflpe(x|h,M), \]
and discuss first
an adaptation of the AMISE-optimal plug-in method for KDE
(Silverman's Rule). Methods based on
cross-validation are also proposed. The performance of these
approaches will be examined in section~\ref{sec:simulate}.

\subsection{The plug-in approach}
%Approaching Minimum MISE by a Bandwidth-Degree Scan
Note that from Theorem~\ref{th:lorpe-facts}, LOrPE can be viewed as
being equivalent to KDE with a~high-order kernel, $\Keff$. The optimal AMISE expression for KDE with kernel function
$\Keff(\cdot)$ of order $r$, is known to be (e.g., Wand \&
Jones, 1995),
\begin{equation}\label{amise-prof-r}
\text{AMISE}_{h_{\ast}}(r) = \frac{2r+1}{2r}\left[ 2r(r!)^{-2}R\sub{\Keff}(r)^{2r} 
    \mu\sub{\Keff}(r)^2R_{f^{(r)}}(r) n^{-2r} \right]^{1/(2r+1)},
\end{equation} 
with corresponding optimal value of $h$,
\begin{equation}\label{h-prof-r}  
h_{\ast}(r) = \left[ \frac{ (r!)^2R\sub{\Keff}(r) }{
    2rn\mu\sub{\Keff}(r)^2R_{f^{(r)}}(r) } \right]^{1/(2r+1)},  
\end{equation}
where
$f^{(r)}$ denotes the $r$-th derivative of $f$, and
\[
\mu\sub{\Keff}(r) = \int x^r\Keff(x)dx, \qquad R\sub{\Keff}(r)=\int \Keff(x)^2dx, \qquad
R_{f^{(r)}}(r) =  \int f^{(r)}(x)^2dx. 
\]
The unknown moments $\mu\sub{\Keff}(r)$ and
  $R\sub{\Keff}(r)$ can be computed once the underlying
  kernel $K(\cdot)$ is selected; e.g., for Gaussian kernels we have Hermite polynomials, for beta
  kernels Gegenbauer polynomials, etc. For sample sizes in the range $10^2\leq
  n\leq 10^4$,
optimal values of $r$ are likely to be relatively low,
and thus  these
moments can be tabulated across a~few $r$ values with a~symbolic mathematics computer
package, and then included in the relevant programs.
 
The only real difficulty is
estimation of $R_{f^{(r)}}(r)$, but as explained by Wand \& Jones (1995), a~simple
transformation leads to the expression $R_{f^{(r)}}(r)=(-1)^r\psi_{2r}$,
and thus it suffices to study estimation of functionals
$\psi_s\equiv\E[f^{(s)}(X)]$, for $s$ even. For this Wand
\& Jones  (1995) propose
multi-stage direct plug-in algorithms, involving the iteration of a~KDE-type estimator of $\psi_r$ with
optimal bandwidth that depends on $\psi_{s}$, $s>r$. Starting with a
rough estimate of $\psi_s$ at some stage, which can be based on the
well-known value corresponding to a~$N(\mu,\sig^2)$,
%\begin{equation}\label{norm-psi-est} 
%\psi_s=\frac{(-1)^{s/2} s!}{(2\sig)^{s+1}(s/2)!\sqrt{\pi}},  
%\end{equation} 
this is iterated to arrive at some estimate $\hat{\psi}_s$. A naive estimate of
$\psi_{2r}$ follows by using an estimate of $\sig$ (e.g., sample
standard deviation)\footnote{For the case $r=2$ in the context of KDE this is known as
\emph{Silverman's Rule}.}.
%resulting in the estimate of  $R_{f^{(r)}}(r)$,
%\[ \hat{R}_{f^{(r)}}(r) = (-1)^r\psi_{2r} = \frac{(2r)!}{(2\hsig)^{2r+1}r!\sqrt{\pi}}. \]
Plugging the resulting estimate of  $R_{f^{(r)}}(r)$ into (\ref{amise-prof-r})
gives eventually,
\begin{equation}\label{norm-amise} 
 \text{AMISE}_{h_{\ast}}(r) \approx \frac{2r+1}{4r\hsig}\left[ \frac{2r(2r!)}{(r!)^3\sqrt{\pi}} 
    \,\mu\sub{\Keff}(r)^2\,\left(\frac{R\sub{\Keff}(r)}{n}\right)^{2r}
  \right]^{1/(2r+1)} .
\end{equation}
Now minimize (\ref{norm-amise}) in $r$ to get
$\hr$ (which by Theorem~\ref{th:lorpe-facts} immediately provides also
an estimate of
$M$). Finally,
substitute $\hr$ into (\ref{h-prof-r}) to obtain the estimates
\begin{equation}\label{amise-optimal-h-M}
 \hh_{AMISE} =  2\hsig\left[ \frac{(\hr!)^3\sqrt{\pi}}{2\hr(2\hr!)n} 
    \, \frac{R\sub{\Keff}(\hr)}{\mu\sub{\Keff}(\hr)^2}
  \right]^{1/(2\hr+1)},\qquad\text{and}\qquad 
\hat{M}_{AMISE}=\begin{cases} \hr+1, & \hr\text{ even},\\ \hr+2, & \hr\text{ odd}.
\end{cases}
\end{equation}  

Of course, this can only serve as a rough estimate, the intent being to provide
reasonable initial values for a more refined search. The fact that
LOrPE naturally self-adjusts near the support end points,
complicates the calculation of the boundary contribution
into the AMISE, as well as the analysis of the bias introduced by the
truncation of the reconstructed density when forced to be non-negative (with subsequent renormalization).

\subsection{Cross-validation methods}
Least squares cross-validation (LSCV) for estimation of a generic
PDF $f$ considers the integrated squared
error of the  density estimate,
\begin{equation}\label{ise-criterion}
ISE = \int{\left[\hf(x) - f(x)\right]^2 dx}. 
\end{equation}
As proposed by Bowman (1984) and Hall (1983), this leads eventually to
minimization of the LSCV criterion. Applied to LOrPE, this yields 
\begin{equation}\label{lscv-criterion}
LSCV(h,M) = \int{\hflpe(x|h,M)^2 dx} - \frac{2}{n}\sum_{i = 1}^{n}\hflpe^{(-i)}(x_{i}|h,M),
\end{equation}
where 
\[
\hflpe^{(-i)}(x|h,M) =\frac{1}{(n-1)h}\sum_{k=0}^{M}\sum_{\substack{j=1\\j\neq i}}^{n}P_{k}\left(\frac{x_{j} - \xf}{h}\right) K\left(\frac{x_{j} - x\sub{fit}}{h}\right) P_{k}\left(\frac{x - \xf}{h}\right),
\]
is the leave-one-out LOrPE density estimate from (\ref{eq:expansion}), obtained by omitting
the $i$-th observation. As suggested in the literature (e.g., Sheather, 2004), the
existence of multiple minima means that it is prudent to plot
$LSCV(h,M)$ over a grid of $h$ and $M$ values. From an asymptotic
perspective, the main drawback of
this criterion is its slow rate of convergence.  

%\subsection{Pseudo Likelihood Cross-validation Method}
A related simpler and intuitively appealing but less popular approach, is likelihood
cross-validation (LCV), the essential idea dating back to at least
Habbema~\etal{} (1974) and Duin (1976); see for example Silverman
(1986) or Givens \& Hoeting (2013, ch.~10) for an
updated discussion.  This is based on taking the likelihood function of
the leave-one-out density estimate above, leading to minimization of
\[
LCV(h,M) = \prod_{i=1}^{n}\hflpe^{(-i)}(x_{i}|h,M).
\]
Reasoning that the density values at each point
  are taken from slightly different distributions (and not from the same
  distribution as in a genuine likelihood), the term \emph{pseudo}-LCV
  might perhaps be more suitable. 

An obvious obstacle with implementation of this criterion is the
situation when $\hflpe^{(-i)}(x_{i}|h,M)=0$ for some $i$. Its use is also problematic
for densities with infinite support due to the strong influence
exerted by fluctuations in the tails.
To avoid these situations a regularization condition can be introduced, leading to the modified \emph{regularized}
LCV (RLCV) criterion,
%(NPStat package, \cite{NPstat2012}) 
\begin{equation}\label{rlcv-criterion}
RLCV(h,M) = \prod_{i=1}^{n}\max\left\{\hflpe^{(-i)}(x_{i}|h,M),
  \frac{\hflpe^{(+i)}(x_i|h,M)}{n^{\alpha}}\right\},
\end{equation}
where $ \alpha>0$ is the regularization parameter, and 
\[
\hflpe^{(+i)}(x|h,M) =\frac{1}{nh}\sum_{k=0}^{M}P_{k}\left(\frac{x_{i} - \xf}{h}\right) K\left(\frac{x_{i} - x\sub{fit}}{h}\right) P_{k}\left(\frac{x - \xf}{h}\right),
\]
is the contribution of data point $x_i$ toward the LOrPE density
estimate (\ref{eq:expansion}). Note therefore that for each $i=1,\ldots,n$ we have
\[ \hflpe(x|h,M)=\hflpe^{(+i)}(x|h,M)+\frac{n-1}{n}\hflpe^{(-i)}(x|h,M). \] 

The case for regularizing LCV was 
made as early as Schuster \& Gregory (1981) who remarked that for tails
exponential and heavier, the use of LCV without regularization
  results in inconsistent density estimates. From a large number of simulations, we have noted that $ \alpha=0.5$
is a reasonable default value. Of course, one can also add $\alpha$ to
the list of tuning parameters to be selected via RLCV, a possibility
that will be explored in section~\ref{sec:simulate}.

\subsection{Effective degrees of freedom and shrinkage}
In situations where truncation of the density below zero
is unnecessary, LOrPE functions as a linear smoother of the EDF,
analogously to KDE. This can be seen by taking the definition of the KDE effective
kernel from Theorem~\ref{th:lorpe-facts}, and observing that we can write (\ref{eq:expansion}) as
\begin{equation*}
\hflpe(x) = \int \frac{1}{h}K\sub{eff}\left(\frac{x-y}{h}\right)\hfemp(y)dy.
\end{equation*}
This suggests the possibility of adapting the idea of \emph{effective degrees of
  freedom} for linear smoothers in a regression setup (Buja~\etal{},
1989), to the analogous situation of density estimation. If $S$ is the
smoothing matrix, the first of three sensible definitions for the effective degrees of
  freedom in a linear smoother, as given by  Buja~\etal{} (1989), is $\Tr(S S^T)$.

For an arbitrary
bandwidth, calculation of this trace appears to be analytically intractable
due to edge effects.
However, in the limit as $h \rightarrow \infty$, recall from Theorem~\ref{th:osde}
that LOrPE converges to OSDE in terms of Legendre polynomials. Now, for a density
fit by a polynomial of degree $M$, the number of degrees of freedom
of the fit (number of free parameters) is obviously
$M$ ($M + 1$ coefficients minus the one constraint from
normalizing the PDF). As the effective degrees of freedom in a smoother
is not limited to integers, this motivates a natural extension of LOrPE
to non-integer values of $M$. Through suitable
choice of the taper function, we can ensure that the effective degrees
of  freedom in any given fit is always $M$. 

To formalize this, consider without loss of generality a PDF supported on $[-1, 1]$.
With $t(\cdot)$ a chosen taper function and the $\{L_k\}$ defined as
in Theorem~\ref{th:osde}, OSDE smoothing is then seen to be
performed by the linear operator $S(x, y) = \sum_{k=0}^{\infty}
t(k)L_k(x) L_k(y)$, in an appropriate inner product space. Requiring
the inner product with the EDF to yield OSDE, motivates the following definition:
\begin{equation*}%\label
\hfose(x)=\langle S(x, y),\hfemp(y)\rangle\equiv\int_{-1}^1 S(x, y) \hfemp(y)dy = 
\frac{1}{n}\sum_{i=1}^{n}\sum_{k=0}^{\infty}t(k)L_{k}\left(x\right)L_{k}\left(x_i\right).
\end{equation*}
This is now in the form of (\ref{general-osde}), with the $t(k)$
playing the role of the shrinkage coefficients $w_k$.
The operator $S$ is in fact self-adjoint (symmetric), so that
\begin{eqnarray*} 
S S^T &=& \langle S(x, z),S(z,y)\rangle = \int_{-1}^1 S(x, z) S(z, y)
dz \\
&=& \int_{-1}^1 \sum_{k=0}^{\infty} t(k) L_k(x) L_k(z) \sum_{j=0}^{\infty} t(j) L_j(z) L_j(y) dz\\
&=& \sum_{k=0}^{\infty} t^2(k) L_k(x) L_k(y),
\end{eqnarray*}
the last line following from identity
(\ref{legendre-ortho}). Additionally, note that we have
\begin{equation*}%\label
\Tr(S S^T) = \int_{-1}^1 S S^T \delta(x - y) dx dy = \int_{-1}^1\sum_{k=0}^{\infty}
t^2(k)L_k^2(x)dx = \sum_{k=0}^{\infty}t^2(k).
\end{equation*}
Adapting the above definition for the effective degrees of
  freedom from Buja~\etal{} (1989) to density estimation, we therefore
  arrive at the identity
\begin{equation}\label{eff-df-identity}
M=\Tr(S S^T)-1=\sum_{k=0}^{\infty}t^2(k)-1. 
\end{equation}
There are many possible choices for $t(\cdot)$ which would make
(\ref{eff-df-identity}) work, but perhaps the simplest is to take the
step function approach of section~\ref{sec:lorpe-intro}. However, if
the optimal $M$ is not an integer, an extra adjustment is needed, so
that a more general prescription (with $m=\lfloor M\rfloor$ denoting
the largest integer less than or equal to $M$) is to define:
\[
 t(k) = \begin{cases}
1, & k\leq m, \\
\sqrt{M-m}, & k=m+1,\\
0, & k\geq m+2. 
\end{cases}
%\qquad\qquad\text{where } m=\lfloor M\rfloor.
\]
Throughout the paper, we adopt these shrinkage coefficients in all
instances where LOrPE is applied, for any given bandwidth $h$. 

%\begin{remark}
%An alternative sensible definition for effective degrees of
%freedom in the present context would seem to be the replacement of $\Tr(SS^T)$ with  $\Tr(SVS^T)$ in
%(\ref{eff-df-identity}), where $V$ is the covariance matrix of a
%multinomial distribution with an equal probability of $1/n$ for all $n$
%bins. This is perhaps the most natural generalization of the arguments
%given by Buja~\etal{} (1989), given that the smoothing operation is
%being performed on the EDF, and not on a vector of independent
%homoscedastic observations.
%\end{remark}

%%%%%%%%%%%%%%%%%%%%%%%%%%%%%%%%%%%%%%%%%%%%%%%%%%%%%%%%%%%%%%%%%%%%%%%%%%%%
\section{Simulations}\label{sec:simulate}

The primary goal of this section is to compare the MISE performance of LOrPE
with that of its main competitor, KDE. This will be done both from
oracle and non-oracle based perspectives. The oracle based
comparisons, so called
because the optimization has access to the true analytical ISE, are aimed at benchmarking the performance of the two
methods, especially with regard to estimating densities that are sharply truncated. 
The non-oracle based comparisons will explore the performance of LOrPE
to all of its rivals and analogues discussed thus far: KDE, OSDE, and LLDE.

To produce a spanning set of densities $f$ to be investigated,
some elements from the
list in Wand \& Jones (1995, Table 2.2) were employed as a starting
point. This includes the KDE-optimal Beta(4,4) on $(-1,1)$, as derived
by Terrell (1990) for minimizing AMISE through minimization of total curvature.
To these were added a few that are sharply truncated. Table~\ref{tab:dist-list} lists the choice of distributions
selected for the simulation study, where $\phi(z)$ and $\Phi(z)$
denote the PDF and CDF of a standard normal. In particular, there are
three distributions with sharp boundaries: two standard normals, one truncated
at $0$ and the other at $-1$, and a standard exponential. It is
expected that KDE will handle the N(0,1) truncated at $0$ well using 
data reflection (or mirroring), and it would therefore be interesting to compare its
performance with that of LOrPE which does not enjoy this
advantage.  On the other
hand, we would expect to see LOrPE  outperform KDE for the
N(0,1) truncated at $-1$, as the data reflecting method doesn't work
well in this case (due to discontinuity of the first derivative).

%%%%%%%%%%%%
\begin{table}[htb]
\caption{List of distributions for simulations.}
\label{tab:dist-list}
\begin{center}
\begin{tabular}{lr}
%\hspace{-10pt}
Name of $X$ &  Distribution/Density of $X$ \\ %[3pt]
\hline\hline
Beta(4,4) on $[-1,1]$ & $f^*(x)=\frac{35}{32}(1-x^2)^3I(|x|<1)$ \\
(optimal by KDE) &  \\
%& &\\
\hline
%& &\\
N(0,1) \nc & $f(x)=\phi(x)$   \\ 
\hline
%& &\\
Normal Mix 1 &  $X\sim\frac{3}{4}Z_1+\frac{1}{4}Z_2$  \\ 
(bimodal)& $Z_1\sim\text{N}(0,1)$ and $Z_2\sim\text{N}(3/2,1/9)$  \\ 
%& &\\
\hline
%& &\\
Exponential(1) & $f(x)=e^{-x}I(x>0)$ \\
(sharp boundary at $0$) & \\
%& & \\
\hline
%& &\\
N(0,1) on $[0,\infty)$ \nc & $f(x)=2\phi(x)I(x>0)$  \\
(truncated at $0$)&  \\
%& &\\
\hline
%& &\\
N(0,1) on $[-1,\infty)$ \nc & $f(x)=\frac{1}{\Phi(1)}\phi(x)I(x>-1)$ \\
(truncated at $-1$)& \\
%& &\\
\hline
%& &\\
Normal Mix 2 &  $X\sim\frac{2}{3}Z_1+\frac{1}{3}Z_2$  \\
(sharp peak at $0$) & $Z_1\sim\text{N}(0,1)$ and $Z_2\sim\text{N}(0,1/100)$  \\ 
%& &\\
\hline\hline
%& &\\
\end{tabular}
\end{center}
\end{table}
%%%%%%%%%%%%%%%%%%%

%%%%%%%%%%%%%%%%%%%%%%%%%%%%%%%%%%%%%%
\subsection{Oracle MISE comparisons: LOrPE vs.~KDE}\label{subsec:oracle-mise}
The ``oracle'' MISE comparisons, called ``best case'' by Jones \& Henderson (2007), are useful for
benchmarking LOrPE vs.~KDE in determining the
best possible performance for each method with regard to estimation of
a particular density. Dassanayake (2014) details the procedure used to effect these comparisons
for each of the distributions in
Table~\ref{tab:dist-list}. This involves performing a computationally
intensive search for the  optimal $h^{\ast}$ and $M^{\ast}$ that
minimize the MISE over grids of polynomial degree values,
  $M\in\mathcal{M}$, and  bandwidths
  $h\in\mathcal{H}$. MISEs were
calculated by averaging 1,000 numerical estimates of ISE values  (\ref{ise-criterion}). For KDE, $M$
  is related to the kernel order via result (iii) of
  Theorem~\ref{th:lorpe-facts}, and is therefore the approximate kernel
  order. 

Figure~\ref{fig:oracle-mise}
displays the resulting $\log_{10}(\text{MISE}(h^{\ast},M))$ values as a function of
$M$, for each of LOrPE and KDE, and sample sizes of $n=10^3$
 and $n=10^5$.
%
%%%%%%%%%%%%%
\begin{figure}[tbh]
\begin{center}
\includegraphics[scale=1]{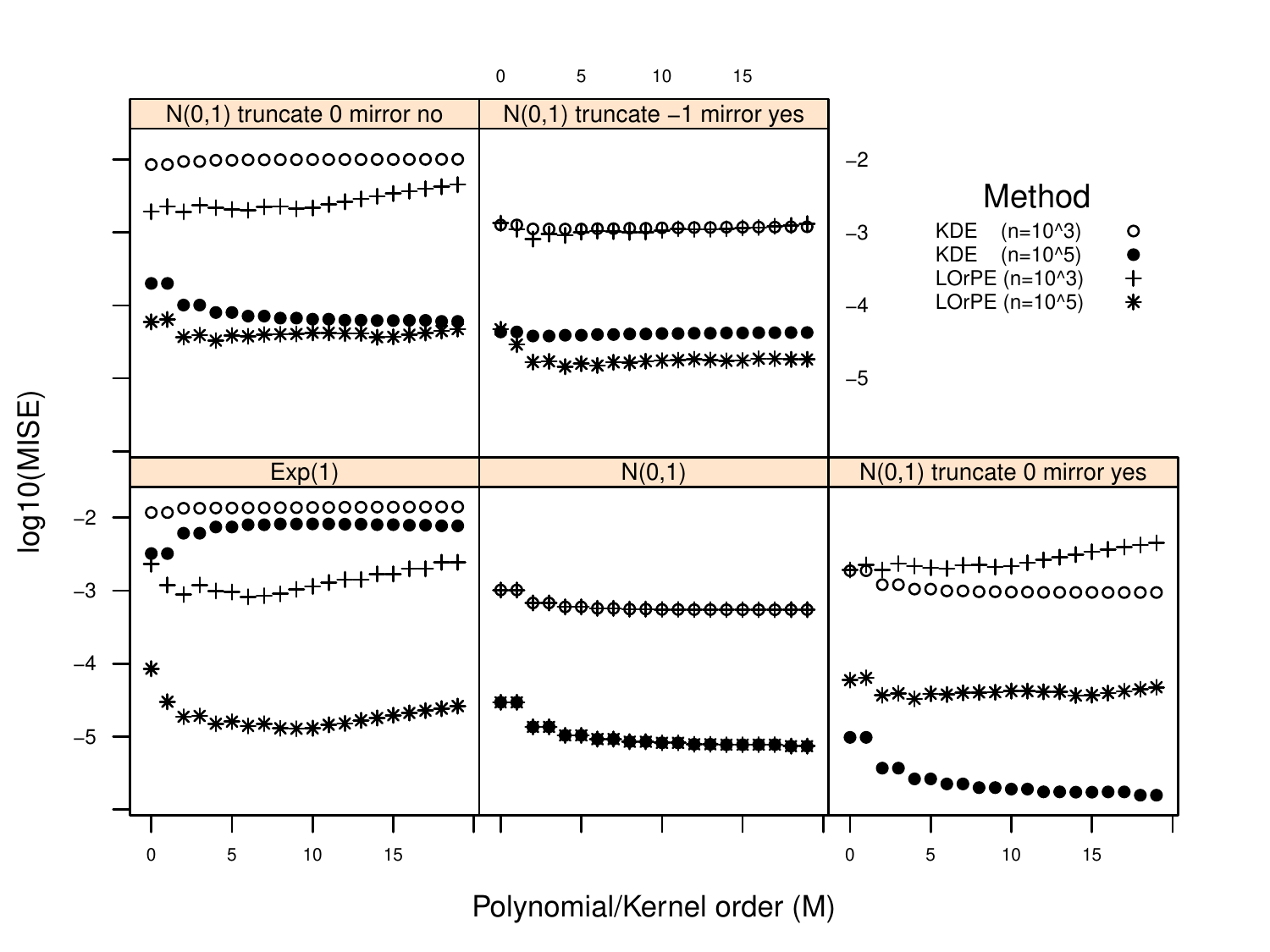}
\caption{Oracle MISE values for sample sizes
  $n=10^3$ and $n=10^5$, as a function of polynomial order $M$ (LOrPE)
  or approximate kernel order
  $M$ (KDE).}
\label{fig:oracle-mise}
\end{center}
\end{figure}
%%%%%%%%%%%%%
%
According to these graphical summaries, it is clear that LOrPE works in
a similar manner to KDE when estimating densities with exponentially
declining tails at both ends of the support,
such as the $N(0,1)$. Similar results were observed for the Beta$(4,4)$, and
the two Normal Mixes (not shown). For densities with sharp edges,  LOrPE tends to
attain lower MISE values than KDE. The $N(0,1)$ truncated 
at $0$ (with KDE mirroring) is a notable exception; but the better
performance of KDE is only really discernible at larger sample sizes and
higher kernel orders. If the crucial data mirroring property of KDE at the
boundaries is removed, then the tables are reversed in
favor of LOrPE, particularly at small sample sizes and low kernel
orders. The Exponential$(1)$ constitutes a dramatic case in favor of
LOrPE, while the  $N(0,1)$ truncated at $-1$ (with KDE benefiting from
mirroring) is somewhere in  between these
two extremes. Note that LOrPE does not use data mirroring (although it
can use kernel
mirroring whereby the weight function is reflected at the boundary and added
to the non-reflected part).

The appropriate minimum oracle $\log_{10}(\text{MISE}(h^{\ast},M^{\ast}))$ values for all the
densities of
  Table~\ref{tab:dist-list}, are displayed in
  Table~\ref{tab:oracle-mises}, along with the corresponding optimal
  $(M^{\ast},h^{\ast})$. Note that all truncated $N(0,1)$ KDE values were obtained using
  data mirroring, whereas the un-truncated $N(0,1)$ did not. As can be
  seen, at lower sample sizes all LOrPE estimates have lower (or the same) MISE,
  except for the truncated normals. However, this KDE advantage for the
  truncated normal at $-1$ gradually erodes, so that at higher sample
  sizes only the KDE estimates for the $0$ truncated $N(0,1)$ persist
  in having  lower MISE than LOrPE.
%%%%%%%%%%%%%
\begin{table}[tbh]
\caption{Oracle $log_{10}(\text{MISE})$ values for the densities in
  Table~\ref{tab:dist-list} as a function of sample size. The values
  in parentheses correspond to the optimal $(M^{\ast},h^{\ast})$,
  where $M$ is the polynomial order (LOrPE) or approximate kernel order (KDE) and
  $h$ is the bandwidth. For each density and each $n$, the lowest of the two MISE values
appears in bold face.}%
\centering
\label{tab:oracle-mises}
\begin{tabular}{lcccccc}              
\multirow{2}{*}    & \multicolumn{2}{c}{\underline{$n=10^2$}} &
\multicolumn{2}{c}{\underline{$n=10^3$}} & \multicolumn{2}{c}{\underline{$n=10^4$}} \\
{Distribution}     & LOrPE   & KDE & LOrPE & KDE & LOrPE & KDE	\\  
\hline \hline
\multirow{2}{*}{N(0,1)}  
& \bf{-2.441} & -2.440 & \bf{-3.260} & \bf{-3.260} & \bf{-4.177} & \bf{-4.177}  \\
& (19, 11.4) & (13, 8.3) & (16, 8.2) & (17, 8.2) & (17, 7.0) & (17, 7.0) \\ [3pt]
\multirow{2}{*}{Normal Mix 1}  
& \bf{-2.014} & \bf{-2.014} & \bf{-2.774} & \bf{-2.774} & \bf{-3.661} & \bf{-3.661} \\
& (0, 1.0)  & (0, 1.0)  & (4, 1.5)  & (4, 1.5)  & (11, 2.0) & (11, 2.0) \\ [3pt]
\multirow{2}{*}{Normal Mix 2 }  
& \bf{-1.378} & -1.370 & \bf{-2.208} & \bf{-2.208} & \bf{-3.108}  & \bf{-3.108} \\
& (0, 0.25) & (2, 0.43) & (6, 0.51) & (6, 0.51) & (14, 0.74) &(14, 0.74) \\ [3pt]
\multirow{2}{*}{N(0,1) on $[0,\infty)$}  
& -2.243 & \bf{-2.475} & -2.997 & \bf{-3.318} & -3.869 & \bf{-4.213}	\\
& (0, 1.2) & (17, 9.7)& (2, 1.9) & (19, 8.6) & (4, 2.7) & (18, 7.5) \\  [3pt]
\multirow{2}{*}{N(0,1) on $[-1,\infty)$}  
& -2.223 & \bf{-2.241} & \bf{-3.091} & -2.609 & \bf{-3.932} & -3.010	\\
& (2, 3.0) & (4, 3.9) & (2, 2.1) & (1, 0.79) & (2, 1.6) & (0, 0.19) \\ [3pt]
\multirow{2}{*}{Beta(4,4)}  
& -2.044 & \bf{-2.064} & \bf{-2.890} & -2.886 & \bf{-3.824} & -3.705	\\
& (4, 13.0) & (8, 2.04)& (4, 1.5) & (10, 2.9) & (6, 11.6) & (9, 1.5) \\ [3pt]
\multirow{2}{*}{Exponential(1)}  
& \bf{-2.265} & -1.462 & \bf{-3.085} & -1.954 & \bf{-4.002} & -2.392	\\
& (2, 4.1) & (0, 0.48) & (6, 13.2) & (0, 0.16) & (8, 13.7) & (0, 0.082) \\
\hline \hline
\end{tabular}
\end{table}
%%%%%%%%%%%%%
%\vspace{1cm}		
%\cb
%QUESTIONS to resolve with Placida: 
%\begin{itemize} 
%\item Figs 2 \& 3: did KDE for N(0,1) always use no mirror? Correct.
%\item Table 2: did KDE for N(0,1) always use no mirror? Correct.
%\end{itemize}
%\nc

%%%%%%%%%%%%%%%%%%%%%%%%%%%%%%%%%%%%%%
\subsection{Non-oracle MISE comparisons}\label{subsec:non-oracle-mise}
The intent in this section is to compare LOrPE MISE values to those of its closest
competitors, KDE, LLDE, and OSDE, in a realistic (non-oracle) setting. In order to make these comparisons
as fair as possible in terms of mimicking an unsophisticated user, ``reasonable'' default settings were used for the
the respective tuning parameters of each method. The details are as
follows.
\begin{description}
\item[LOrPE:] Uses the plug-in estimates from
  (\ref{amise-optimal-h-M}), implemented via the NPStat package (Volobouev, 2012).
\item[KDE:] Uses the Sheather \& Jones (1991)  two-stage plug-in ("dpi" or
"direct plug-in")  bandwidth with
  a normal kernel and sample standard deviation as the
  estimate of scale, implemented via R library \texttt{ks}. 
\item[LLDE:] Uses the above KDE plug-in bandwidth, a Gaussian kernel,
  and zero-order  polynomial, implemented via the R library \texttt{locfit}.
\item[OSDE:] The estimator in
  (\ref{classic-osde}) was coded with the number of terms, $J$, chosen
          according to the Hart (1985) scheme. The NPStat
package (Volobouev, 2012) is used to generate the necessary orthogonal
          polynomials on a grid (consisting of 2,048 points). The lowest and highest order statistics from the sample of
size $n$ are mapped to the $1/(2n)$ and $1 -
1/(2n)$ quantiles, respectively. All other points are then mapped linearly using these two extremes. The support of the density
is now estimated by inversely mapping the $[0, 1]$
interval. The discrete
analog of Legendre polynomials are employed; generated by the Gram-Schmidt
procedure for a uniform weight on the grid in $[0, 1]$.
\end{description}

MISEs were calculated empirically as in section~\ref{subsec:oracle-mise}. The data were once again simulated
from most of the distributions in Table~\ref{tab:dist-list}, as well
as  Student's $t$ with 1, 2, and 3 degrees of freedom truncated to the
interval $[-1,2]$. The results are presented on
Table~\ref{tab:non-oracle-mises} which summarizes the
$log_{10}(\text{MISE})$  values for three different sample sizes
within each distribution. We note that  LOrPE yields consistently minimum
MISE values for the sharply truncated normal distributions and the
Exponential.  For the truncated $t$ distributions the results are
mixed, but LOrPE tends to dominate for larger sample sizes. In nearly
all cases where LOrPE does not yield the minimum
MISE, it is a close second.
%
%%%%%%%%%%%%%
\begin{table}[tbh]
\caption{Non-oracle $log_{10}(\text{MISE})$ values for 4 estimators of
  the true density. MISEs are  based on 1,000
  realizations simulated from a variety of distributions and sample
  sizes ($n$).  For each
  distribution and each $n$, the lowest of the 4 MISE values appears in bold face.}%
\centering
\label{tab:non-oracle-mises}
\begin{tabular}{lccccc}              
Distribution & $log_{10}(n)$  & LOrPE   & KDE & LLDE & OSDE \\  
\hline \hline
\multirow{3}{*}{N(0,1)}  
& $2$                 & -2.138 & \bf{-2.198} & -2.183 & -1.634 \\ 
& $3$                 & -3.088 & -2.973 & \bf{-3.179} & -1.650 \\ 
& $4$                 & -4.045 & -3.741 & \bf{-4.158} & -2.652 \\ 
\hline
\multirow{3}{*}{N(0,1) on $[0,\infty)$} 
& $2$                 & \bf{-2.177} & -1.576 & -1.427 & -1.666 \\ 
& $3$                 & \bf{-2.923} & -2.010 & -1.594 & -2.642 \\ 
& $4$                 & \bf{-3.770} & -2.392 & -1.613 & -3.634 \\ 
\hline
\multirow{3}{*}{N(0,1) on $[-1,\infty)$} 
& $2$                 & \bf{-2.085} & -2.023 & -1.837 & -1.823 \\ 
& $3$                 & \bf{-3.005} & -2.564 & -2.188 & -2.799 \\ 
& $4$                 & \bf{-3.874} & -2.980 & -2.248 & -3.776 \\ 
\hline
\multirow{3}{*}{Normal Mix 1} 
& $2$                 & -1.743 & \bf{-1.888} & -1.824 & -1.148 \\ 
& $3$                 & -2.108 & \bf{-2.223} & -2.028 & -1.149 \\ 
& $4$                 & \bf{-2.477} & -2.278 & -2.060 & -1.149 \\ 
\hline
\multirow{3}{*}{Exponential(1)} 
& $2$                 & \bf{-2.239} & -1.374 & -1.299 & -0.677 \\ 
& $3$                 & \bf{-2.915} & -1.783 & -1.386 & -1.2328 \\ 
& $4$                 & \bf{-3.740} & -2.157 & -1.393 & -0.679 \\ 
\hline
\multirow{3}{*}{$t(1)$ on $[-1,2]$} 
& $2$                 & -1.891 & -2.317 & \bf{-2.447} & -1.347 \\
& $3$                 & -2.712 & -2.694 & \bf{-3.000} & -2.337 \\ 
& $4$                 & \bf{-3.661} & -3.118 & -3.128 & -3.337 \\ 
\hline
\multirow{3}{*}{$t(2)$ on $[-1,2]$} 
& $2$                 & -1.980 & -2.346 & \bf{-2.366} & -1.408 \\ 
& $3$                 & -2.839 & -3.065 & \bf{-3.191} & -2.404 \\ 
& $4$                 & \bf{-3.724} & -3.546 & -3.591 & -3.400 \\ 
\hline
\multirow{3}{*}{$t(3)$ on $[-1,2]$} 
& $2$                 & -2.039 & \bf{-2.328} & -2.289 & -1.437 \\ %[3pt]
& $3$                 & -2.879 & -3.061 & \bf{-3.207} & -2.427 \\ 
& $4$                 & -3.769 & \bf{-3.856} & -3.763 & -3.416 \\ 
\hline \hline 
\end{tabular}
\end{table}
%%%%%%%%%%%%%

%%%%%%%%%%%%%%%%%%%%%%%%%%%%%%%%%%%%%%
\subsection{Oracle and non-oracle MISE comparisons: LOrPE vs.~KDE}\label{subsec:oracle-nonoracle-mise}

Recall that the LOrPE plug-in approach is meant to serve as an initial estimate
in a more refined search for appropriate $h$ and $M$ values.  Since
plug-in formulae do not take boundary effects into account, we would expect  sub-optimal
performance from LOrPE in regard to estimation in the
vicinity of the support boundary. The already good LOrPE plug-in performance seen in
section~\ref{subsec:non-oracle-mise} could therefore potentially be 
improved by using cross-validation methods.  Given that
oracle comparisons provide lower bounds on MISE values, we may ask two interesting
questions of LOrPE
cross-validation methods: (i) how close can they get to LOrPE  oracle values, and (ii) how close can they get to KDE oracle values. 

This section aims to answer these questions, using both the LSCV
and RLCV criteria, as described by equations (\ref{lscv-criterion})
and (\ref{rlcv-criterion}), respectively, with the regularization
parameter set at $\al=0.5$ in the latter. Both oracle and
non-oracle methods are considered, and as such the simulation details for the
former parallel those of section~\ref{subsec:oracle-mise}, while those
for the latter are
identical to section~\ref{subsec:non-oracle-mise}. For KDE oracle
computations: the $N(0,1)$ and $N(0,1)$ truncated at $-1$ did not use
data mirroring, while the $N(0,1)$
truncated at $0$ used mirroring. This time a
variety of sample sizes were considered in order to reveal any
possible convergence of methods as $n\rightarrow\infty$.
 Also, for brevity only 6 of the (representative) distributions listed in
  Table~\ref{tab:non-oracle-mises} were examined. 

The resulting
  $log_{10}(\text{MISE})$ values appear plotted vs.~sample size in
  Figure~\ref{fig:non-oracle-mise}. The answer to the above two
  questions seems clear. First,  LOrPE
cross-validation methods come very close to LOrPE  oracle values, with
the RLCV criterion dominating LSCV most of the time. Secondly, and
remarkably, except for the $N(0,1)$ and $0$ truncated $N(0,1)$, LOrPE
cross-validation methods produce consistently lower MISE values than KDE oracle.  

%%%%%%%%%%%%%
\begin{figure}[tbh]
\begin{center}
\includegraphics[scale=1]{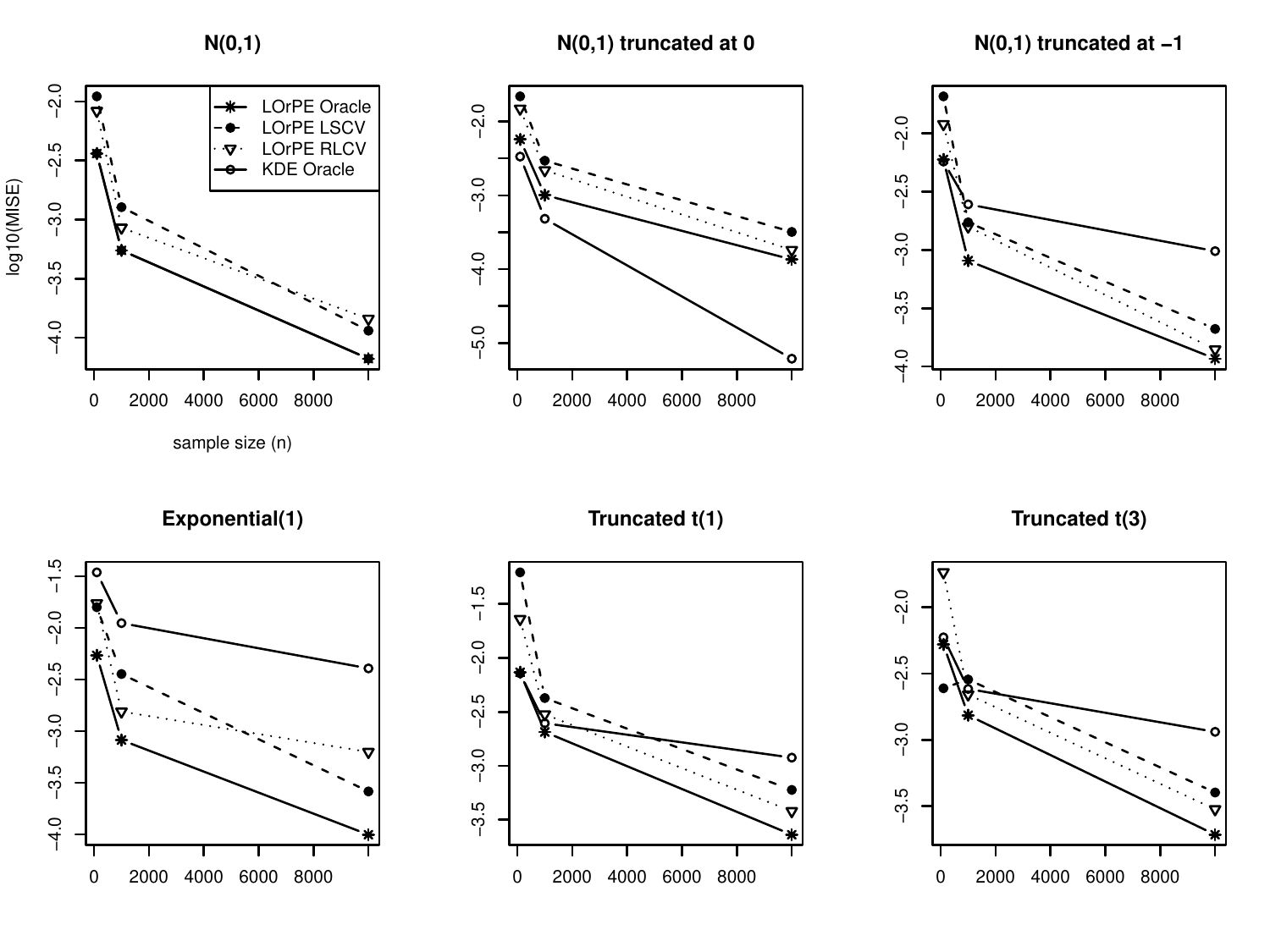}
\caption{Plots of oracle (solid lines) and non-oracle $log_{10}(\text{MISE})$ values
  for LOrPE and KDE. The non-oracle methods of LSCV (dashed lines) and RLCV (dotted lines) apply only to  LOrPE.}
\label{fig:non-oracle-mise}
\end{center}
\end{figure}
%%%%%%%%%%%%%
%\vspace{1cm}		
%\cb
%QUESTIONS to resolve with Placida about Fig 4: 
%\begin{itemize} 
%\item did KDE for N(0,1) use no mirror? Correct.
%\item did KDE for truncated N(0,1) use mirror? N(0,1) truncated at 0 is with mirroring and N(0,1) truncated at -1 is without mirroring. 
%\end{itemize}
%\nc

In some cases, and especially at small sample sizes, the LOrPE-RLCV 
method may not be achieving the lowest possible
MISE. One reason for this could be that the regularization parameter
choice of $\al=0.5$ is not optimal. To investigate this issue,
Figure~\ref{fig:r-mise} plots the $log_{10}(\text{MISE})$ values
vs.~$\al\in [0,1]$ for the distributions considered in
Figure~\ref{fig:non-oracle-mise}, and for sample size $n=10^3$ only. The error bars around each value
extend from the $84.13^{th}$  to the $15.87^{th}$ percentiles divided
by $2\sqrt{n}$, and provide a sense of sampling variability through a
robust measure of the standard error. It is clear that, perhaps with
the exception of the $N(0,1)$ case, LOrPE-RLCV  is reasonably
insensitive to the choice of $\al$. This suggests that it may not
be necessary to estimate this extra tuning parameter, and just use a
default value of $\al=0.5$. 

%%%%%%%%%%%%%
\begin{figure}[tbh]
\begin{center}
\includegraphics[scale=1]{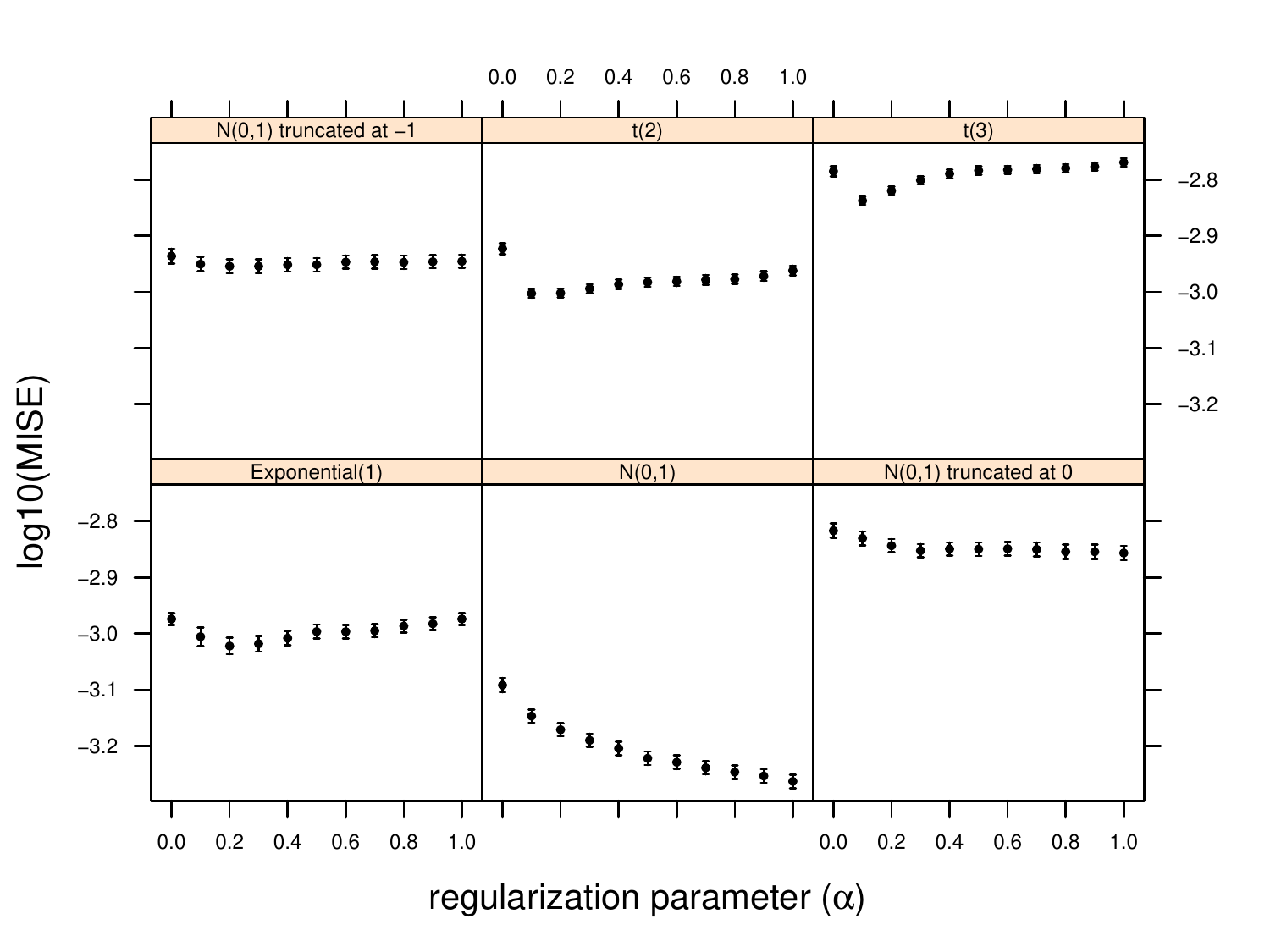}
\caption{Plots of $log_{10}(\text{MISE})$ values vs.~the
  regularization parameter $\al$ for the LOrPE-RLCV method applied to
  1,000 simulated datasets of sample
  size $n=10^3$. The error bars provide a robust measure of the standard error.}
\label{fig:r-mise}
\end{center}
\end{figure}
%%%%%%%%%%%%%

%%%%%%%%%%%%%%%%%%%%%%%%%%%%%%%%%%%%%%%%%%%%%%%%%%%%%%%%%%%%%%%%%%%%%%%%%%%%%
\section{Real Data Application}\label{sec:real-data}

As an illustration of the proposed methodology, we consider the
lengths of $n=86$ spells of psychiatric treatment (days) undergone by patients
used as controls in a study of suicide risks (Copas \& Fryer,
1980). The data were presented by Silverman (1986, Table 2.1), who used
them to demonstrate certain inadequacies with KDE. Scaled to the
unit interval by dividing all observations by the largest value of
737, it is publicly available in the R library \texttt{bde} as 
"suicide.r''. Figure~\ref{fig:suicide-data} displays a histogram with
rugplot, and five density estimates. Sturges' formula is used to compute the breaks and number of classes in the
histogram shaded in gray (the default in R function ``hist'').

%%%%%%%%%%%%%
\begin{figure}[tbh]
\begin{center}
\includegraphics[scale=1]{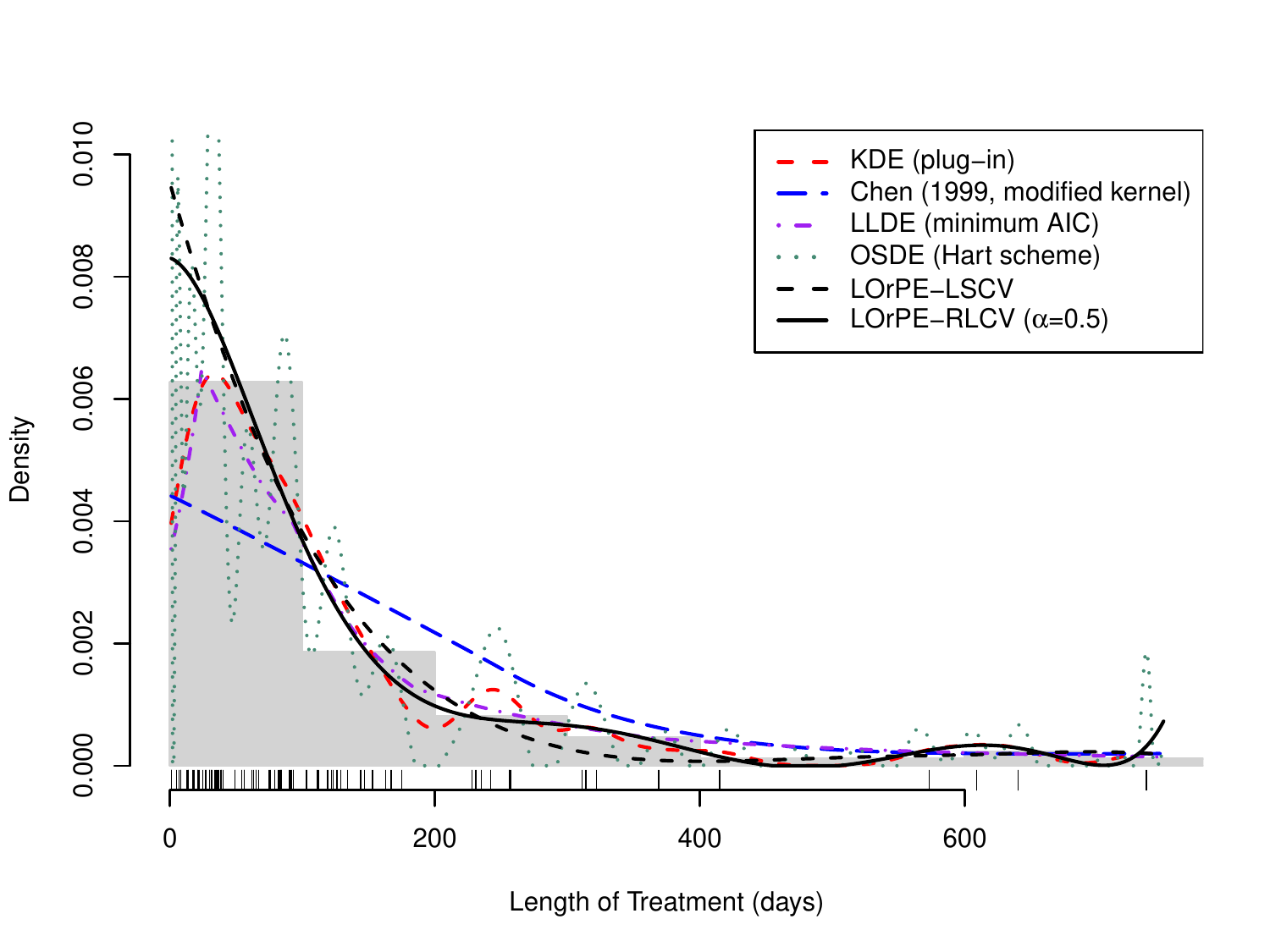}
\caption{Density estimates for the Suicide data: KDE (red/dashed), Chen
  (1999) (blue/longdash), LLDE (purple/dotdash), OSDE (green/dotted), LOrPE-LSCV (black/dashed), LOrPE-RLCV (black/solid).}
\label{fig:suicide-data}
\end{center}
\end{figure}
%%%%%%%%%%%%%

KDE (red dashed lines) uses the plug-in
bandwidth as described in section~\ref{subsec:non-oracle-mise}. As expected, there is an apparent bias at the
  left end of the support, the estimate dips down toward zero, whereas
  the data suggests there should be a large amount of mass in that
  vicinity. A similar outcome occurs with LLDE (purple dotdash lines), which displays less
  ``wigglyness'' in the tail, but a sharp ``kink'' at the peak. As
  suggested by Loader (1999), greater care was exercised in selecting
  appropriate values for the LLDE tuning parameters: we used AIC to
  identify  the optimal nearest neighbor component of the smoothing
  parameter and polynomial order, instead of the (quicker) KDE plug-in
  bandwidth and degree zero of
  section~\ref{subsec:non-oracle-mise}, as a means of specifying the effective
  degrees of freedom. No appreciable changes were observed with
  kernels different from  Gaussian. 
OSDE (green dotted lines) obviously undersmoothes badly, a consequence of the degree
preferred by Hart's (1985) method being $J=48$. 

Rather more believable performance was obtained with Chen's (1999)
boundary-corrected beta-kernel density estimator (blue longdash lines),
which picks up the mass at the peak, but seems to be somewhat
oversmoothed. This is Chen's (1999) second beta-kernel estimator,
called ``modified'' in the R library \texttt{bde} with which it is
implemented, since Chen (1999) showed it consistently outperforms the 
first beta-kernel estimator. The critical bandwidth tuning parameter
is set at the default value of $b=n^{-2/5}$, the AMISE optimal order
for such kernels (Chen, 1999). Finally, we note the arguably superior
performance of LOrPE (black dashed and solid lines). LORPE-RLCV uses the default value of
$\al=0.5$ for the regularization parameter, as suggested by the
simulations in section~\ref{subsec:oracle-nonoracle-mise}, and the
optimal degree and bandwidth were $M=7$ and $h=2047.2$. LORPE-LSCV
delivers a similar performance with $M=2.9$ and $h=605.3$.

%%%%%%%%%%%%%%%%%%%%%%%%%%%%%%%%%%%%%%%%%%%%%%%%%%%%%%%%%%%%%%%%%%%%%%%%%%%%%
\section{Summary Remarks}
We have shown that LOrPE is a useful extension to the (already vast) array of tools for
nonparametric density estimation. This novel idea has at its basis the
local expansion of the EDF
into a series of orthogonal polynomials around a selection of
grid points. It was demonstrated that away from the support boundary
LOrPE essentially functions like KDE with a high-order kernel, whereas close
to the boundary LOrPE is adaptive in the sense that its effective
kernels naturally change shape to accommodate the endpoint, thereby
reducing boundary bias. Faster asymptotic convergence rates follow
naturally by virtue
of the higher-order kernels. LOrPE also shares important connections with LLDE and OSDE. 
Simulations demonstrated that LOrPE generally outperforms these estimators, and especially KDE, when
estimating densities with sharp boundaries. Also, LOrPE allows for
the inclusion of a taper function, a feature which takes LOrPE beyond
KDE with high-order kernels. 

These reasons make LOrPE applicable in a wider range of problems than
KDE. When estimating distributions which decay rapidly at infinity,
LOrPE results are identical to KDE. Additionally, the local
polynomial modeling can effectively reduce the bias
for densities with several (at least $M$) continuous
derivatives. A proper balance of $h$ and $M$ can thus result in a
better overall estimator. Cross-validation, and especially a
regularized version of likelihood cross-validation, seems to be a promising way
of selecting appropriate values for these tuning parameters. For large
$n$, the simulations suggest LOrPE MISE approaches the oracle (or ``best
case'') MISE. Finally, LOrPE calculations remain essentially unchanged in multivariate settings, requiring only a switch to multivariate orthogonal polynomial systems.

%\newpage
%%%%%%%%%%%%%%%%%%%%%%%%%%%%%%%%%%%%%%%%%%%%%%%%%%%%%%%%%%%%%%%%%%%%%%%%%

%%%%%%%%%%%%%%%%%%%%%%%%%%%%%%%%%%%%%%%%%%%%%%%%%%%%%%%%%%%%%%%%%%%%%%%%%%%%%%%%%%%%%
\appendix

%%%%%%%%%%%%%%%%%%%%%%%%%%%%%%%%%%%%%%%%%%%%%%%%%%%%%%%%%%%%%%%%%%%%%%%%%%%%
\section{Proof of Theorem~\ref{th:lorpe-facts}}
%Added by Alex, see Appendix. 
Substituting the expression for $c_k(\cdot)$ from~(\ref{eq:emp-ck})
into (\ref{eq:lorpe}) gives 
\begin{eqnarray*}
\hflpe(x) &=& \sum_{k=0}^{\infty} t(k) c_{k}(x\sub{fit}, h)
P_{k}\left(\frac{x - x\sub{fit}}{h}\right) \\
&=& \sum_{k=0}^{\infty} t(k) \left\{ \frac{1}{nh}\sum_{i=1}^{n}P_{k}\left(\frac{x_i - x\sub{fit}}{h}\right)K\left(\frac{x_i - x\sub{fit}}{h}\right)  \right\}
P_{k}\left(\frac{x - x\sub{fit}}{h}\right) \\
&=& \frac{1}{nh}\sum_{i=1}^{n}\underbrace{\left\{  \sum_{k=0}^{\infty} t(k)
  P_{k}\left(\frac{x - x\sub{fit}}{h}\right) P_{k}\left(\frac{x_i - x\sub{fit}}{h}\right) K\left(\frac{x_i - x\sub{fit}}{h}\right)  \right\}}_{K\sub{eff}\left(\frac{x-x_i}{h}\right)}\\
&=& \frac{1}{nh}\sum_{i=1}^{n}K\sub{eff}\left(\frac{x-x_i}{h}\right)
\end{eqnarray*}
Defining $y=(x\sub{fit}-x_i)/h$, evaluate $K\sub{eff}$ at grid point
$x\sub{fit}$ to see that
\[ K\sub{eff}\left(\frac{x\sub{fit}-x_i}{h}\right) \equiv
K\sub{eff}\left(y\right) = \sum_{k=0}^{\infty} t(k)
  P_{k}\left(0\right) P_{k}\left(-y\right) K\left(-y\right). \] 
To establish (i)--(iii), note that Assumptions (a) and (b) imply  that $P_{k}(x)$ is an even
(odd) function for any even (odd) integer $k$. This means $P_{k}(-x)=P_{k}(x)$ for $k$ even, and $P_{k}(0)=0$ for $k$
odd, so that the effective kernel becomes
\begin{equation}\label{eq:Keff-even}
 K\sub{eff}\left(x\right) = \sum_{\{k\,:\,k\geq 0,\ k\text{ even}\}} t(k)
  P_{k}\left(0\right) P_{k}\left(x\right) K\left(x\right),
\end{equation}
and $K\sub{eff}(-x)=K\sub{eff}(x)$ is an even function supported
also on $(-a_K,a_K)$, thus establishing (i). 
Now, multiplying both sides of the above
equation by $P_0(x) \equiv 1$ and integrating, gives
\begin{eqnarray*}
\int_{\R} K\sub{eff}(x)dx &=& \int_{-a_K}^{a_K} K\sub{eff}(x)P_0(x)dx \\
&=& \int_{\at}^{\bt} \sum_{\{k\,:\,k\geq 0,\ k\text{ even}\}} t(k)
P_{k}(0)P_0(x)P_{k}(x)K(x) dx, 
\end{eqnarray*}
which follows by Assumption (c). Now, interchanging integral and sum in
the above expression and then using (\ref{eq:norm}),  establishes (ii)
as follows: 
\begin{eqnarray*}
\int_{\R} K\sub{eff}(x)dx &=& \sum_{\{k\,:\,k\geq 0,\ k\text{ even}\}} t(k) P_{k}(0) \int_{\at}^{\bt}
P_0(x)P_{k}(x)K(x) dx \\
&=& \sum_{\{k\,:\,k\geq 0,\ k\text{ even}\}} t(k) P_{k}(0)\delta_{0\,k} \\
&=& t(0) P_{0}(0) = t(0).
\end{eqnarray*}
%which establishes (ii). 
%
To prove (iii), first define the $j$-th kernel moment as 
\[ \mu_j(K\sub{eff}) \equiv \int_\R x^j K\sub{eff}(x)dx. \]
Now, since the effective kernel is an even
function, it is clear  $\mu_j(K\sub{eff})=0$ for $j$ odd.
Hence, it suffices to consider the case when $j$ is even, whence
\[ \mu_j(K\sub{eff}) = \int_\R x^j K\sub{eff}(x)dx =
\sum_{\{k\,:\,0\leq k\leq M,\ k\text{ even}\}}
P_{k}(0) \int_{-a_K}^{a_K} x^j P_{k}(x) K(x) dx = \sum_{\{k\,:\,0\leq k\leq M,\ k\text{ even}\}}\alpha_{jk} P_{k}(0), \]
if we define
\[ \alpha_{jk} = \int_{-a_K}^{a_K} x^j P_{k}(x)K(x) dx = \int_{\at}^{\bt} x^j P_{k}(x)K(x) dx.  \] Now, from the theory of orthogonal polynomials, we know that
\[ x^j = \sum_{k=0}^{j} a_{jk}P_{k}(x), \qquad\text{where}\qquad  a_{jk}= \int_{\at}^{\bt} x^j P_{k}(x)K(x) dx = \alpha_{jk}. \]
Since $\alpha_{jk}$ is the coefficient of the $P_{k}(x)$ contribution
(a polynomial of order $k$) to the series expansion of $x^j$, it is obvious that
$\alpha_{jk}=0$ for $k>j$, and $\alpha_{jk}=0$ when $k$ and $j$ have
opposite parity (only even $k$ terms contribute when $j$ is even, and
vice-versa). With these observations, it is clear that for $j\leq M$
\[ \mu_j(K\sub{eff}) = \sum_{k=0}^{M} \al_{jk}P_{k}(0),
\qquad\text{and}\qquad x^j = \sum_{k=0}^{M} \al_{jk}P_{k}(x), \]
whence we see that 
\begin{eqnarray*}
\mu_j(K\sub{eff}) &=&  \left.x^j\right|_{x=0} 
= \begin{cases}
   1, & j=0, \\
   0, & j=1,\ldots,M.
 \end{cases}
\end{eqnarray*}
If $M$ is even, then  since $M+1$
is odd and $K\sub{eff}(x)$ is an even
function, we have additionally that $\mu_{M+1}(K\sub{eff})=0$. Thus the effective kernel order is $M+1$ if $M$ is odd, and $M+2$ if $M$ is
      even.

%%%%%%%%%%%%%%%%%%%%%%%%%%%%%%%%%%%%%%%%%%%%%%%%%%%%%%%%%%%%%%%%%%%%%%%%%%%%
\section{Proof of Theorem~\ref{th:osde}}
%%%%%%%%%%%%%%%%%%%%%%%%%%%%%%%%%%%%%%%%%%%%%%%%%%%%%%%%%%%%%%%%%%%%%%%%%%%%
As $h \rightarrow \infty$ the value of the kernel $K(\cdot)$ becomes
less and less dependent on the grid point $\xf$ inside
$[a, b]$. In fact, starting from (\ref{eq:norm0}), note that for very large h, $K((x_i-\xf)/h)$ essentially becomes
constant on $[a, b]$. Equation (4) then gives rise to Legendre
polynomials since these are generated
when integrating with
respect to a constant weight function, in a manner similar to
Proposition~\ref{rk:gegen}. To see this, start with the orthonormal Legendre
polynomials $L_k(z)$ on  $[-1, 1]$, satisfying
\begin{equation}\label{legendre-ortho}  
\delta_{jk}=\int_{-1}^{1} L_{j}(z)L_{k}(z)dz . 
\end{equation}
To construct the corresponding orthonormal  system on $[a, b]$, we
make the transformation, $z=(2x-a-b)/(b-a)$, so that (\ref{legendre-ortho}) becomes
\begin{eqnarray}
\delta_{jk} &=& \int_{a}^{b}
\frac{2}{b-a}L_{j}\left(\frac{2x-a-b}{b-a}\right)L_{k}\left(\frac{2x-a-b}{b-a}\right)dx
= \int_{a}^{b} P_{j}\left(x\right)P_{k}\left(x\right)dx, \label{lege-delta}  
\end{eqnarray}
where 
\begin{equation}\label{temp-lege-poly}  
P_k(x)\equiv\sqrt{\frac{2}{b-a}}L_{k}\left(\frac{2x-a-b}{b-a}\right).
\end{equation}
Now construct an orthonormal  system on the interval $[\at,\bt]$ using $K(0)$
as the weight function instead of 1. By means of the transformation
$y=(x-\xf)/h$, (\ref{lege-delta}) then becomes 
\begin{eqnarray*}
\delta_{jk} &=& \int_{a}^{b}
\frac{1}{\sqrt{K(0)}}P_{j}\left(x\right)\frac{1}{\sqrt{K(0)}}P_{k}\left(x\right)K(0)
dx \\
&=& \int_{\at}^{\bt}
\sqrt{\frac{h}{K(0)}}P_{j}\left(yh+\xf\right)\sqrt{\frac{h}{K(0)}}P_{k}\left(yh+\xf\right)K(0)
dy \\
&=& \int_{\at}^{\bt} \tilde{P}_j(y)\tilde{P}_k(y)K(0)dy
\end{eqnarray*}
where
\begin{equation}  
\tilde{P}_k(y)\equiv\sqrt{\frac{2h}{(b-a)K(0)}}L_{k}\left(\frac{2yh+2\xf-a-b}{b-a}\right),
\end{equation}
which follows from (\ref{temp-lege-poly}). Now, from the proof of Theorem~\ref{th:lorpe-facts} we have the
following expression for LOrPE:
\[
\hflpe(x) = \frac{1}{nh}\sum_{i=1}^{n}\sum_{k=0}^{M}
  P_{k}\left(\frac{x - x\sub{fit}}{h}\right) P_{k}\left(\frac{x_i -
      x\sub{fit}}{h}\right) K\left(\frac{x_i - x\sub{fit}}{h}\right).
\]
Substituting $\tilde{P}_k(\cdot)$ for $P_k(\cdot)$ in the above
equation, gives
\begin{eqnarray}
\hflpe(x) &=& \frac{1}{nh}\sum_{i=1}^{n}\sum_{k=0}^{M}
  \tilde{P}_{k}\left(\frac{x - x\sub{fit}}{h}\right) \tilde{P}_{k}\left(\frac{x_i -
      x\sub{fit}}{h}\right) K\left(\frac{x_i - x\sub{fit}}{h}\right)
  \\
&=& \frac{1}{nh}\sum_{i=1}^{n}\sum_{k=0}^{M} \frac{2h}{(b-a)K(0)}
L_{k}\left(\frac{2\left(\frac{x-\xf}{h}\right)h+2\xf-a-b}{b-a}\right) \\
&& \qquad\qquad L_{k}\left(\frac{2\left(\frac{x_i-\xf}{h}\right)h+2\xf-a-b}{b-a}\right)K\left(\frac{x_i
    - x\sub{fit}}{h}\right) \\
&=& \frac{1}{n}\sum_{i=1}^{n}\sum_{k=0}^{M}\frac{2}{(b-a)K(0)}L_{k}\left(\frac{2x-a-b}{b-a}\right)L_{k}\left(\frac{2x_i-a-b}{b-a}\right)K\left(\frac{x_i
    - x\sub{fit}}{h}\right).
\end{eqnarray}
Since 
\[ \lim_{h \rightarrow \infty} K\left(\frac{x_i- x\sub{fit}}{h}\right) =
K(0), \]
we obtain
\begin{equation}
\hflpe(x) = \frac{1}{n}\sum_{i=1}^{n}\sum_{k=0}^{M}\sqrt{\frac{2}{b-a}}L_{k}\left(\frac{2x-a-b}{b-a}\right)\sqrt{\frac{2}{b-a}}L_{k}\left(\frac{2x_i-a-b}{b-a}\right), 
\end{equation}
which is the classical OSDE (\ref{classic-osde}) in terms of the orthogonal polynomials
\[ \phi_{k}(x)=\sqrt{\frac{2}{b-a}} L_k\left(\frac{2x-a-b}{b-a}\right). \] 
%\end{proof}

%%%%%%%%%%%%%%%%%%%%%%%%%%%%%%%%%%%%%%%%%%%%%%%%%%%%%%%%%%%%%%%%%%%%%%%%%

\begin{thebibliography}{aaaaa}
%%%%%%%%%%%%%%%%%%%%%%%%%%%%%%%%%%%%%%%%%%%%%%%%%%%%%%%%%%%%%%%%%%%%%%%%%

\bibitem{Bowman1984}
Bowman, A.W. (1984), ``An alternative method of cross-validation for the
 smoothing of density estimates'', \emph{Biometrika}, 71, 353--360.

\bibitem{Buja-etal-1989}
Buja, A., Hastie, T., and Tibshirani, R. (1989), 
``Linear Smoothers and Additive Models'', \emph{The Annals of
  Statistics} (with discussion),
17, 453--555.

\bibitem{SavitGolay1964}
\v{C}encov, N.N. (1962), ``Evaluation of an unknown distribution
density from observations'', \emph{Soviet Math. Dokl.}, 3, 1559--1562.
%Savitzky, A. and Golay, M.J.E. (1964), ``Smoothing and differentiation of data by simplified least squares procedures'', \emph{Analytical Chemistry}, 36, 1627--1639.

\bibitem{Charpentier2006}
Charpentier, A., Fermanian J.-D. and Scaillet, O. (2006),
``The Estimation of Copulas: Theory and Practice'' in:
Rank, J. (ed.), {\it Copulas: From Theory to Application in Finance},
35-60, Risk Books: London.

\bibitem{Chen1999}
Chen, S.X. (1999), ``Beta kernel estimators for density
functions'', \emph{Comp. Statist Data Anal.}, 31, 131--45.

\bibitem{ChenHuang2007}
Chen, S.X. and Huang, T.M. (2007), ``Nonparametric estimation of
copula functions for dependence modelling'',
{\it Canadian Journal of Statistics}, {\bf 35}, 265-282.

\bibitem{chiu92}
Chiu, S.-T. (1992), ``An Automatic Bandwidth Selector for Kernel
Density Estimation'', {\it Biometrika}, 79, 771-782.

\bibitem{CopasFryer80}
Copas,  J.B. and Fryer, M.J. (1980), ``Density Estimation and Suicide
Risks in Psychiatric Treatment'', \emph{Journal of the Royal
  Statistical Society}, Series A, 143, 167--176. 

\bibitem{DPADass2014}
Dassanayake, D.P.A. (2014), \emph{Local Orthogonal Polynomial Expansion and Empirical Sadddlepoint Approximation for Density Estimation}, PhD Dissertation, Texas Tech University, Lubbock.

\bibitem{dh86}
Diggle, P.J. and Hall, P. (1986), ``The selection of terms in an
orthogonal series density estimator'', {\it J. Americ. Statist. Assoc.}
{\bf 81}, 230-233.

\bibitem{duin1976}
Duin, R.P.W. (1976), ``On the choice of smoothing parameter for Parzen
estimators of probability density functions'' \emph{IEEE
Trans. Computers}, C-25, 1175--1179. 

\bibitem{Efrom99}
Efromovich, S. (1999), \emph{Nonparametric Curve Estimation: methods,
  theory, and applications}, New York: Springer. 

\bibitem{Elderton1969}
Elderton, W.P. and Johnson, N.L. (1969), \emph{Systems of Frequency Curves}, New
York: Cambridge University Press.

\bibitem{GijbelsMielniczuk1990}
Gijbels, I. and Mielniczuk, J. (1990), ``Estimating the Density of a Copula Function'', Communications in Statistics: Theory and Methods, {\bf 19}, 445-464.

\bibitem{givenshoeting2012}
Givens, G.H. and Hoeting, J.A. (2013), \emph{Computational
  Statistics}, 2nd ed., Hoboken: Wiley. 

\bibitem{Habbema-etal-1974}
Habbema, J.D.F., Hermans, J. and
  van der Broek, K. (1974), "A stepwise discrimination program using
  density estimation", in Bruckman, G. (ed.), \emph{Compstat 1974}, Vienna:
  Physica Verlag, 100--110.


\bibitem{hm88}
Hall, P. and Marron, J.S. (1988), ``Choice of kernel order in density estimation'', {\it The Annals of Statistics}, 16, 161-173.

\bibitem{HallTao02}
Hall, P. and Tao, T. (2002), ``Relative efficiencies of kernel and
local likelihood density estimators'', \emph{Journal of the Royal Statistical Society, Series B}, 64, 537-547.

\bibitem{Hall1983}
Hall, P. (1983), ``Large sample optimality of least squares
cross-validation in density estimation'', \emph{Ann. Statist.}, 11, 1156--1174.

\bibitem{Hart1985}
Hart, J.D. (1985), ``On the choice of a truncation point in Fourier
series density estimation'', \emph{J. Statist. Comput. Simulation}, 21, 95–-116. 

\bibitem{Heidenreich2013}
Heidenreich, N.-B., Schindler, A. and Sperlich, S. (2013),
``Bandwidth selection for kernel density
estimation: a review of fully automatic selectors'',
{\it Advances in Statistical Analysis} {\bf 97}, 403-433.

\bibitem{HjortJones96}
Hjort, N.L. and Jones, M.C. (1996), ``Locally parametric nonparametric density estimation'',
\emph{The Annals of Statistics}, 24, 1619-1647.

\bibitem{JOnesHender2007}
Jones, M.C. and Henderson, D.A. (2007), ``Kernel-type density estimation on the unit interval'', \emph{Biometrika}, 94, 977--984. 

\bibitem{Kakizawa2004}
Kakizawa, Y. (2004), ``Bernstein polynomial probability density
estimation'', \emph{J. Nonparametr. Stat.}, 16, 709–-729. 

\bibitem{llde96}
Loader, C.R. (1996), ``Local Likelihood Density Estimation'', {\it Ann. Statist.}
{\bf 24}, 1602-1618.

\bibitem{llde99}
Loader, C.R. (1999), \emph{Local Regression and Likelihood}, New
York: Springer. 

\bibitem{MalSch14}
Malec, P. and Schienle, M. (2014), ``Nonparametric kernel density
estimation near the boundary'', \emph{Computational Statistics and Data Analysis}, 72, 57--76. 

\bibitem{Schuster-Gregory-1981}
  Schuster, E.F. and Gregory, C.G. (1981), "On the inconsistency of maximum
  likelihood nonparametric density estimators", in Eddy, W.F. (ed.),
  \emph{Computer Science and Statistics: Proceedings of the 13th Symposium
  on the Interface}, New York: Springer-Verlag, 295--298. 


\bibitem{SiSe04}
Sheather, S.J. (2004), ``Density Estimation'', {\it Statist. Sci.}
{\bf 19}, 588-597.

\bibitem{SheJones91}
Sheather, S.J. and Jones, M.C. (1991), ``A reliable data-based
bandwidth selection method for kernel density estimation'', \emph{Journal of
the Royal Statistical Society}, series B, 53, 683-–690. 

\bibitem{Scott92}
Scott, D.W. (1992), \emph{Multivariate Density Estimation: Theory
  Practice and Visualization}, New York: Wiley. 

\bibitem{Silver1986}
Silverman, B.W. (1986), \textit{Density Estimation for Statistics and
  Data Analysis}, London: Chapman \& Hall. 

\bibitem{TharterLock1993}
Tarter, M.E. and Lock, M. (1993), \emph{Model-free Curve Estimation}, New York: Chapman \& Hall.

\bibitem{terrell90}
Terrell, G.R. (1990), ``The maximal smoothing principle in density estimation'', \emph{J. Amer. Statist. Assoc.}, 85, 470–-477. 

\bibitem{Thas2010}
Thas, O. (2010), \emph{Comparing Distributions}, New
York: Springer.

\bibitem{Volobouev2011}
Volobouev, I. (2011), ``Matrix Element Method in HEP: Transfer Functions,
Efficiencies, and Likelihood Normalization'',
arXiv:1101.2259 [physics.data-an].

\bibitem{Volobouev2012}
Volobouev, I. (2012), \emph{NPStat (Non-parametric Statistical Modeling
and Analysis)}; software available at http://npstat.hepforge.org http://npstat.hepforge.org.

\bibitem{WandJones95}
Wand, M. and Jones, M. (1995), \emph{Kernel Smoothing}, London: Chapman \& Hall. 

\bibitem{Larry06}
Wasserman, L. (2006), \emph{All of Nonparametric Statistics}, New
York: Springer. 

\bibitem{Wigmans2000}
Wigmans, R. (2000), \emph{Calorimetry: Energy Measurement in Particle
  Physics}, New
York: Oxford University Press.

\bibitem{YangMarron99}
Yang, L. and Marron, J.S. (1999), ``Iterated
Transformation-Kernel Density Estimation'', \emph{Journal of the American Statistical Association}, 94, 580--589. 

\bibitem{ZFan00}
Zhang, J. and Fan, J. (2000), ``Minimax kernels for nonparametric curve estimation'',
\emph{Journal of Nonparametric Statistics}, 12, 417-445.

%\bibitem{ZhuSuChip06}
%Zhu, M., Su, W. and Chipman, H. (2006), ``LAGO: a computationally
%efficient approach for statistical detection'', \emph{Technometrics}, 48, 193–-205. 

\end{thebibliography}
\end{document}